\newtheorem{theorem}{Theorem}
\newtheorem{lemma}[theorem]{Lemma}
\newtheorem{corollary}[theorem]{Corollary}
\newtheorem{remark}[theorem]{Remark}
\newtheorem{definition}[theorem]{Definition}
\newenvironment{proof}
      {\medskip\noindent{\bf Proof :}\hspace{1em}}
      {\hfill$\Box$\medskip}
\newcommand{\defeq}{\stackrel{\textup{def}}{=}}
\newcommand{\qq}{\mbox{\boldmath $q$}}
\newcommand{\xx}{\mbox{\boldmath $x$}}
\newcommand{\yy}{\mbox{\boldmath $y$}}
\newcommand{\zz}{\mbox{\boldmath $z$}}
\newcommand{\vv}{\mbox{\boldmath $v$}}
\newcommand{\ww}{\mbox{\boldmath $w$}}
\newcommand{\pp}{\mbox{\boldmath $p$}}
\renewcommand{\aa}{\mbox{\boldmath $a$}}
\newcommand{\R}{\mathbb{R}}
\newcommand{\nfrac}{\nicefrac}
\newcommand{\balpha}{\mbox{\boldmath $\alpha$}}
\newcommand{\bbeta}{\mbox{\boldmath $\beta$}}
\newcommand{\bb}{\mbox{\boldmath $b$}}
\newcommand{\cc}{\mbox{\boldmath $c$}}
\newcommand{\uu}{\mbox{\boldmath $u$}}
\newcommand{\dd}{\mbox{\boldmath $d$}}
\newcommand{\Real}{{\mbox{$\mathbb R$}}}
\newcommand{\ones}{\mbox{\boldmath $1$}}
\newcommand{\zeros}{\mbox{\boldmath $0$}}
\renewcommand{\l}{\mbox{$\lambda$}}
\newcommand{\zero}{\mbox{\boldmath $0$}}
\DeclareMathOperator*{\argmax}{arg\,max}
\newcommand{\cl}{\mbox{\rm cl}}
\newcommand{\n}{\mbox{$\eta$}}
\date{}
\title{Settling Some Open Problems on 2-Player Symmetric \\ Nash Equilibria
}
\author{Ruta Mehta \ \ \ \ \ \ \ \  Vijay V. Vazirani \ \ \ \ \ \ \ \  Sadra Yazdanbod \\ \\ 
\small{College of Computing, Georgia Institute of Technology.} \\
\texttt{\small rmehta, vazirani, syazdanb@cc.gatech.edu}}
\begin{document}
\maketitle
\thispagestyle{empty}

\begin{abstract}

Over the years, researchers have studied the complexity of several decision versions of Nash equilibrium in (symmetric)
two-player games (bimatrix games). To the best of our knowledge, the last remaining open problem of this sort
is the following; it was stated by Papadimitriou
in 2007: find a non-symmetric Nash equilibrium (NE) in a symmetric game. We show that this problem is NP-complete and the
problem of counting the number of non-symmetric NE in a symmetric game is \#P-complete.

In 2005, Kannan and Theobald 
defined the {\em rank of a bimatrix game} represented by matrices $(A, B)$ to be rank$(A+B)$ and asked whether a NE can be computed in rank 1 games in
polynomial time. Observe that the rank 0 case is precisely the zero sum case, for which a polynomial time algorithm follows
from von Neumann's reduction of such games to linear programming. In 2011, Adsul et. al. 
obtained an algorithm for rank 1 games; however, it does not solve the case of symmetric rank 1 games. We resolve this problem.

\end{abstract}

\newpage
\setcounter{page}{1}

\section{Introduction}
\label{sec.intro}

One of the major achievements of complexity theory in recent years is obtaining a fairly complete understanding of the
complexity of computing a Nash equilibrium (NE) in a two-player game in various situations; such a game can be
represented by two payoff matrices $(A,B)$, and therefore is also known as {\em bimatrix game}. Of the few
remaining open questions, we settle two in this paper regarding symmetric bimatrix games. We note that symmetry arises
naturally in numerous strategic situations. In fact, while providing game theory with its central solution concept, Nash
\cite{nash} felt compelled to also define the notion of a symmetric game and prove, in a separate theorem, that such
(finite) games always admit a {\em symmetric equilibrium}, {\em i.e.,} where all players play the same strategy.
Examples of well-known bimatrix games that are symmetric are Prisoners' Dilemma and Rock-Paper-Scissors. With the growth of
the Internet, on which typically users are indistinguishable, the relevance of symmetric games has further increased.

In a {\em symmetric game} all players participate under identical circumstances, i.e., strategy sets and payoffs.  Thus the payoff of a player $i$ depends only on the strategy, $s$, played by her and the multiset of
strategies, $S$, played by the others, without reference to their identities; moreover, if any other player $j$ were to play
$s$ and the remaining players $S$, the payoff to $j$ would be identical to that of $i$. In case of a bimatrix game $(A,B)$
such a symmetry translates to $B=A^T$. 

We first provide a brief summary of the known results. The seminal works of Daskalakis, Goldberg and Papadimitriou \cite{DGP},
and Chen, Deng and Teng \cite{CDT} proved that finding a NE in a bimatrix game, or a symmetric NE in a symmetric bimatrix game, is
PPAD-complete \cite{papa}.  Before the resolution of this long-standing question, researchers studied the complexity of
computing a NE with desired special properties. For numerous properties, these problems turned out to be
NP-hard\footnote{One exception is the problem of checking if an evolutionarily stable strategy exist in symmetric 2-player
game. This problem was recently shown to be $\sum^2_p$-hard by Conitzer \cite{conitzer}; containment in $\sum^2_p$ was shown
by Etessami and Lochbihler \cite{EL}.}, even for the case of symmetric games \cite{GZ,CS}.

In 2005, Kannan and Theobald \cite{KT} defined the {\em rank of a bimatrix game} $(A, B)$ to be rank$(A+B)$ and asked
whether a NE can be computed in rank 1 games in polynomial time. They also gave an example of a bimatrix rank 1 game that
has disconnected NE, thereby providing evidence that the problem would be a difficult one\footnote{von Stengel
\cite{stengel.r1} went further to give a symmetric bimatrix rank 1 game that has exponentially many disconnected symmetric
Nash equilibria.}.  Observe that the rank 0 case is precisely the zero sum case, for which a polynomial time algorithm
follows from von Neumann's reduction of such games to linear programming. In 2011, Adsul et. al \cite{AGMS} answered this
question in the affirmative; this appears to be the first efficient algorithm for a problem having disconnected solutions.
More recently, Mehta \cite{M} showed that for games of rank 3 or more, and for symmetric games of rank 6 or more, the
problem is PPAD-complete.

We now list the open problems we are aware of. In 2007, Papadimitriou \cite{agt.ch2} asked for the complexity of finding a
non-symmetric NE in a symmetric game.  One motivation for this problem may be the following.  In some situations, it may be
important to find a non-symmetric equilibrium in a symmetric 2-player game. As an example, under a symmetric equilibrium,
both players may exhaust the same resource, for instance if they access the same web site, and this may be undesirable.

Mehta left open the problem of determining the complexity of the following problems: finding a NE in a rank 2 game, and 
finding a symmetric NE in a symmetric game of rank 1, 2, 3, 4, or 5.

In this paper, we show that Papadimitriou's problem is NP-complete. We further show that the problem of counting the number
of non-symmetric NE in a symmetric game is \#P-complete.  We also give a polynomial time algorithm for finding a symmetric
NE in a symmetric game of rank 1. In Section \ref{sec.discussion}, we give some reasons to believe that finding a symmetric
NE in a symmetric game of rank 2 or more should not be in P.

Next, we note that given a symmetric bimatrix rank 1 game, the algorithm of Adsul et. al. \cite{AGMS} is not guaranteed to
produce a symmetric Nash equilibrium, as required in the definition of a symmetric game. Furthermore, the symmetric NE of a
symmetric rank 1 game can also be disconnected, thereby making it a difficult problem from the viewpoint of obtaining a
polynomial time algorithm.  In Section \ref{sec.tech} we give the new ideas that are needed, in addition to those of
\cite{AGMS}, to solve this problem.  For an example of a well-known game having disconnected symmetric equilibria, consider
Battle of Sexes, with appropriate payoffs so that the game is symmetric. 

Recently, McLennan and Tourky \cite{MT} gave the notion of imitation games, which simplified the existing proofs of
NP-completeness of \cite{CS} considerably for the case of symmetric games and led to even more such results. In Appendix
\ref{sec.ig} we study further properties of imitation games.

\subsection{New techniques}
\label{sec.tech}

\noindent{\bf First problem:}
Next we give an overview of the approach to show NP (\#P) hardness for computing (counting) non-symmetric NE in symmetric
games.
A quick look at the set of Nash equilibrium problems proven NP-hard suggests Non-Unique NE \cite{GZ}, i.e., whether the
given bimatrix game has two or more NE, as the most suitable problem to reduce from. Furthermore, an obvious approach is to
use the standard reduction from a bimatrix game $(A,B)$ (where both $A>0$ and $B>0$ are $m \times n$ matrices) to a
symmetric game, namely
\[
M=\begin{bmatrix}0 & A \\ B^T & 0 \end{bmatrix}
\]

For any non-zero vector $\zz\ge 0$, let $\n(z)$ denote the {\em normalized vector}, i.e., its components are
non-negative and add to 1.
It is easy to see that the Nash equilibria $(\aa, \bb)$ of $(A, B)$ are in one-to-one correspondence with the symmetric NE,
$(\n(\nfrac{\aa}{v}, \nfrac{\bb}{w}), \n(\nfrac{\aa}{v}, \nfrac{\bb}{w}))$ of $M$, where $v=\aa^TB\bb$ and $w=\aa^TA\bb$.
Furthermore, from a non-symmetric NE $((\aa, \bb), (\aa', \bb'))$  of game $(M, M^T)$, one can obtain two potential equilibria for
$(A, B)$, namely $(\n(\aa'), \n(\bb))$ and $(\n(\aa), \n(\bb'))$; indeed, one can readily confirm that they satisfy all
complementarity conditions.
However, there is a snag, namely all four of vectors $\{\aa,\bb,\aa',\bb'\}$ may not be nonzero, or one set of vectors may
be a scaled up version of the other, thereby not yielding 2 NE for the game $(A,B)$.  In fact every NE $(\aa, \bb)$ of $(A, B)$
yields a non-symmetric NE
$((\aa,\zeros),(\zeros,\bb))$ for the symmetric game $(M,{M}^T)$, and such a non-symmetric NE yields only one NE for $(A, B)$.

Let us say that a NE has {\em full support} if both players play all their strategies. We first seek a small dimensional
symmetric game which has a unique NE and moreover this NE has full support. Recently, \cite{well} showed that all symmetric
$2\times 2$ games always have pure NE, therefore leading us to consider $3 \times 3$
games. As shown in Lemma \ref{lem.D}, the game $(D, D^T)$, for the matrix $D$ specified below, has the right properties.

Next, let us define a ``blown up'' version of matrix $D$.  For every $k\in \mathbb{R}$, let $K_{n\times m}(k)$ be a matrix with $n$
rows and $m$ columns with all entries equal to $k$, and define $K$ to be the following $(1+m+n)\times (1+m+n)$ matrix:
\[
 D=\begin{bmatrix}
0 & 4& 0\\
2 & 0 & 4 \\
3 & 2 & 0\\
\end{bmatrix}
\ \ \ \ \  \ \ \ \ \ \
K=\left[\begin{array}{ccc}
0 & K_{1\times m}(4) & K_{1\times n}(0) \\
K_{m \times 1}(2) & K_{m\times m}(0) & K_{m \times n}(4)\\
K_{n \times 1}(3) & K_{n\times m}(2) & K_{n\times n}(0)
\end{array}\right]
\ \ \ \  \ \ \ \ 
G=K+\left[\begin{array}{ccc}
0 & 0 & 0\\
0 & 0 & A\\
0 & B^T & 0
\end{array}\right]
\]
Let $\aa,\aa' \in \R^m$, $\bb,\bb' \in \R^n$ and $c,c' \in \R$. Now let $\xx$ and $\yy$ be the following $(1+m+n)$-dimensional
vectors, $\xx = (c, \aa, \bb)$ and $\yy = (c', \aa', \bb')$. Define the {\em collapse} of a $(1+m+n)$-dimensional vector say
$\xx = (c, \aa, \bb)$ to be the 3-dimensional vector whose first component is $c$, the second is the sum of components of
$\aa$ and the third is the sum of components of $\bb$; we will denote this by $\cl(\xx)$. Now it is easy to see that if
$(\xx, \yy)$ is a NE of $(K, K^T)$ then $(\cl(\xx), \cl(\yy))$ must be a NE of $(D, D^T)$.  Therefore, the NE of $(K, K^T)$ must
inherit the properties of the NE of $D$, and hence all four vectors $\{\aa,\bb,\aa',\bb'\}$ must be nonzero! 

The next key idea is to insert the given game $(A, B)$ in this setup in such a way that the composite game has not only the
property established above but also captures certain essential features of the game $(A, B)$. For this we will first make
the assumption that w.l.o.g. the entries of $A$ and $B$ are positive and $< < 1$, and we construct the matrix $G$ given
above. We further define certain $3 \times 3$ matrices, $D_{\epsilon_1,\epsilon_2}$, by perturbing $D$ appropriately (see
Section \ref{sec.Hard}).

We then show that the Nash equilibria of the symmetric game $(G, G^T)$  have an ``image''  on the Nash equilibria of the
perturbed $3 \times 3$ matrix, e.g., we show that payoff of the first player, assuming $(\xx, \yy)$ is played on the
symmetric game $(G, G^T)$, is the same as the payoff of first player if $(\cl(\xx), \cl(\yy))$ is played on
$(D_{\epsilon_1,\epsilon_2} , D_{\epsilon_1',\epsilon_2'})$, for a suitable choice of $\epsilon_1,\epsilon_2,
{\epsilon_1',\epsilon_2'}$. Eventually this leads to showing that $(G, G^T)$ has a non-symmetric NE iff $(A, B)$ has at
least two NE, and moreover, the non-symmetric NE of $(G, G^T)$ are in a one-to-one correspondence with ordered pairs of NE
of $(A, B)$. These give the NP-hardness and \#P-completeness results, respectively.
\medskip

\noindent{\bf Second problem:}
An obvious approach to designing an algorithm for finding symmetric NE in rank-$1$ symmetric games is to impose symmetry in
the approach of Adsul et. al. However, this approach fails and a new approach is called for. In order to describe the
salient features of the latter, it is important to recall their approach and show where it fails.

Their approach was to start with the standard quadratic program (QP) that captures all Nash equilibria of a given bimatrix
game as optimal solutions. Since rank$(A+B) = 1$,  $A + B = \cc \dd^T$, for a suitable choice of vectors $\cc, \dd$. After making this
substitution, the objective function of the QP becomes the product of two linear forms. \cite{AGMS} replaces one of the
linear forms by a parameter $\l$, thereby getting a parameterized linear program LP$(\l)$.  They show that the optimal
solutions of this linear program, over all choices of $\l \in \R$, are precisely all NE of a certain space of rank 1 games,
i.e., $(A, \uu \dd^T - A)$, for all choices of $\uu \in \R^m$; we will denote the bimatrix game $(A, \uu \dd^T - A)$ by $(A,
\uu, \dd)$. They further show that the union of all the polyhedra defined by the constraints of LP$(\l)$, over all $\l$, is
yet another polyhedron. The one-skeleton of the latter polyhedron contains a path whose points are in one-to-one
correspondence with the optimal solutions of LP$(\l)$, over all $\l$.  Additionally, $\l$ is monotonic on this path.
Therefore, they are able find a NE of game $(A, B)$ via a binary search on this path for the ``correct'' value of $\l$.

Adapting this approach to symmetric rank one games will involve the following. We are given game $(A, A^T)$, where $A + A^T
= \cc \dd^T$, and we seek a symmetric NE $(\xx, \xx)$. Clearly, we must start with the standard QP that captures symmetric
equilibria of symmetric bimatrix games. The optimal solutions of the analogous parameterized linear program are not even NE
of games in the corresponding space of rank 1 games, i.e.,  $(A, \uu, \cc)$, for all choices of $\uu \in \R^m$.  The reason is
that this is not a space of symmetric games. 

We rectify this situation by moving to a space of symmetric bimatrix games, but of rank 2. This is made possible by the
observation that matrix $A$ can be written as the sum of a skew-symmetric matrix $K$ and the rank one matrix ${\frac {1}
{2}} \cc \dd^T$, using the fact that $\cc \dd^T$ is a symmetric matrix. Now,
replacing the vector $\dd$ by $\uu$, for all choices of $\uu \in \R^m$, we get the space of rank 2 symmetric games 
\[ \left( \left(K + {\frac {1} {2}} \cc \uu^T\right), \left(K + {\frac {1} {2}} \uu \cc^T\right)^T\right) .\]
The new LP$(\l)$ will capture all symmetric NE of this space of symmetric games. 

At this stage we introduce another idea, thereby achieving a substantial simplification. We bypass the polyhedra mentioned
above completely and reduce the problem of finding the ``correct'' $\l$ to a one-dimensional fixed-point computation in
which every fixed point is guaranteed to be rational.  Such a fixed point can be found efficiently by a binary search and
yields the ``correct'' $\l$, which in turn yields the desired symmetric NE.

\section{Preliminaries}\label{sec.prel}
A bimatrix game is a two player game, each player having finitely many pure strategies (moves).  Let $S_i,\ i=1,2$ be the set of
strategies for player $i$, and let $m\defeq|S_1|$ and $n\defeq|S_2|$.  Then such a game can be represented by
two payoff matrices $A$ and $B$, each of $m\times n$ dimension. If the first player plays strategy $i$ and the second plays $j$, then
the payoff of the first player is $A_{ij}$ and that of the second player is $B_{ij}$. 
Note that the rows of these matrices correspond to the strategies of the first player and the columns
to the strategies of second player. 

Players may randomize among their strategies; a randomized play is called a {\em mixed strategy}.  The set of mixed strategies for the
first player is $X=\{\xx=(x_1,\dots,x_m)\ |\ \xx\ge 0, \sum_{i=1}^m x_i=1\}$, and for the second player is $Y=\{\yy=(y_1,\dots, y_n)\
|\ \yy\ge 0, \sum_{j=1}^n y_j=1\}$.  By playing $(\xx,\yy) \in X \times Y$ we mean strategies are picked independently at random as per
$\xx$ by the first-player and as per $\yy$ by the second-player.  Therefore the expected payoffs of the first-player and second-player
are, respectively $ \sum_{i,j} A_{ij} x_i y_j=\xx^TA\yy\ \ \ \ \mbox{ and }\ \ \ \ \sum_{i,j}B_{ij}x_iy_j = \xx^TB\yy$.

\begin{definition}{(Nash Equilibrium \cite{agt.bimatrix})}
A strategy profile is said to be a Nash equilibrium strategy profile (NESP) if no player achieves a better payoff by a
unilateral deviation \cite{nash}. Formally, $(\xx,\yy) \in X\times Y$ is a NESP iff $\forall \xx' \in X,\
\xx^TA\yy \geq \xx'^TA\yy$ and $\forall \yy' \in Y,\  \xx^TB\yy \geq \xx^TB\yy'$. 
\end{definition}

Given strategy $\yy$ for the second-player, the first-player gets $(A\yy)_k$ from her $k^{th}$ strategy. Clearly, her best
strategies are $\argmax_k (A\yy)_k$, and a mixed strategy fetches the maximum payoff only if she randomizes among her best
strategies. Similarly, given $\xx$ for the first-player, the second-player gets $(\xx^TB)_k$ from $k^{th}$ strategy, and same
conclusion applies. These can be equivalently stated as the following complementarity type conditions,

\[
\begin{array}{ll}
\forall i \in S_1,\hspace{.06in} x_i>0 \ \ \Rightarrow\ \ & (A\yy)_i =\max_{k \in S_1} (A\yy)_k\\
\forall j \in S_2,\hspace{.06in} y_j>0 \ \ \Rightarrow & (\xx^TB)_j = \max_{k \in S_2} (\xx^TB)_k
\end{array}
\]

It is easy to get the following from the above discussion: $(\xx,\yy) \in X\times Y$ is a NE of game
$(A,B)$ if and only if the following holds, where $\pi_1$ and $\pi_2$ are scalars. 

\begin{equation}\label{eq.ne}
\begin{array}{c}
\forall i \in S_1, (A\yy)_i \le \pi_1;\ \ \ \ x_i((A\yy)_i-\pi_1)=0\\
\forall j \in S_2, (\xx^TB)_j \le \pi_2;\ \ \ \ y_j((\xx^TB)_j-\pi_2)=0\\
\end{array}
\end{equation}

Game $(A,B)$ is said to be symmetric if $B=A^T$. In a symmetric game the strategy sets of both the players are identical, i.e., $m=n$,
$S_1=S_2$ and $X=Y$. We will use $n$, $S$ and $X$ to denote number of strategies, the strategy set and the mixed strategy set
respectively of the players in such a game. A Nash equilibrium profile $(\xx,\yy)\in X\times X$ is called {\em symmetric} if $\xx=\yy$.
Note that at a symmetric strategy profile $(\xx,\xx)$ both the players get payoff $\xx^TA\xx$. Using (\ref{eq.ne}) it follows that $\xx
\in X$ is a symmetric NE of game $(A,A^T)$, with payoff $\pi$ to both players, if and only if, 
\begin{equation}\label{eq.sne}
\forall i \in S, (A\xx)_i \le \pi;\ \ \ \ x_i((A\xx)_i-\pi)=0\\
\end{equation}

We will use the above characterization to design an efficient algorithm for finding a symmetric NE of a rank-1 symmetric game
in the Section \ref{sec.symm}. Next section analyzes hardness of finding and counting non-symmetric NE in symmetric games.

\section{NP-hardness of Non-Symmetric NE in a Symmetric Game}
\label{sec.Hard}
As discussed in the introduction, existence of symmetric NE in a symmetric game is guaranteed \cite{nash}, however, a
symmetric game may not have a non-symmetric equilibrium. In this section, we show that checking existence of non-symmetric
NE in general symmetric game is NP-complete and counting such NE is \#P-complete. For the NP-completeness result, we will reduce the
problem of checking {\em non-uniqueness of Nash equilibria in bimatrix games}, which is known to be NP-complete \cite{GZ}, to checking
if {\em symmetric game has a non-symmetric equilibrium}. Refer to the second part of Section \ref{sec.tech} for an overview of the
reduction. The reduction is strong enough to also give \#P-hardness result since counting equilibria in bimatrix games is known to be
\#P-hard \cite{CS}.

We will use the definitions and notation established in Section \ref{sec.tech}. Consider the following matrix 
(also mentioned in Section \ref{sec.tech}):

\[
 D=\begin{bmatrix}
0 & 4& 0\\
2 & 0 & 4 \\
3 & 2 & 0\\
\end{bmatrix}
\]

\begin{lemma}\label{lem.D}
The symmetric game $(D,D^T)$ has a unique symmetric NE, and it has a full support.  
\end{lemma}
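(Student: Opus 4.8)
The plan is to exhibit one fully-mixed symmetric equilibrium of $(D,D^T)$ explicitly and then eliminate every other candidate by a finite case analysis on the support of an equilibrium strategy. For the first part, observe that by the characterization (\ref{eq.sne}) a symmetric NE $\xx\in X$ with all coordinates positive must satisfy $D\xx=\pi\one$, where $\pi$ is the common payoff. The three resulting equations $4x_2=\pi$, $2x_1+4x_3=\pi$, $3x_1+2x_2=\pi$, together with the normalization $x_1+x_2+x_3=1$, have the unique solution $\xx=(2/7,3/7,2/7)$ with $\pi=12/7$ (uniqueness of this solution is also clear from $\det D = 48 \neq 0$). Since all coordinates of this $\xx$ are strictly positive and $D\xx=\pi\one$, condition (\ref{eq.sne}) holds, so $\xx$ is a symmetric NE, and it has full support.

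It remains to show that $(D,D^T)$ has no symmetric NE supported on a proper subset of $\{1,2,3\}$. For a pure strategy $e_i$, (\ref{eq.sne}) requires $D_{ii}=\max_k D_{ki}$, i.e.\ the $i$-th diagonal entry of $D$ must be the largest in column $i$; inspecting the columns $(0,2,3)^T$, $(4,0,2)^T$, $(0,4,0)^T$ shows the maximum always lies off the diagonal, so there is no pure symmetric NE. For a two-element support $\{i,j\}$, let $k$ be the remaining index; setting $x_k=0$, the equal-payoff equation $(D\xx)_i=(D\xx)_j$ and the normalization $x_i+x_j=1$ pin down a single candidate point, and it suffices to check the inequality $(D\xx)_k\le\pi$ there. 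Computing: the support $\{1,2\}$ yields $(2/3,1/3,0)$ with $\pi=4/3$ but $(D\xx)_3=8/3>\pi$; the support $\{2,3\}$ yields $(0,2/3,1/3)$ with $\pi=4/3$ but $(D\xx)_1=8/3>\pi$; and for the support $\{1,3\}$ one has $(D\xx)_1=0$, which forces $\pi=0$ and contradicts $(D\xx)_3=3x_1>0$. In every case (\ref{eq.sne}) fails, so no proper support gives a symmetric NE.

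Combining the two parts --- all $3+3=6$ proper supports are excluded, while the full-support profile above is a symmetric NE --- the strategy $(2/7,3/7,2/7)$ is the unique symmetric NE of $(D,D^T)$, and it has full support, as claimed. I do not anticipate any genuine obstacle here: the argument is routine arithmetic, and the only care required is to be exhaustive over the support patterns and to verify that the single full-support candidate indeed satisfies the equilibrium conditions.
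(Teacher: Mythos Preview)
Your proof is correct and follows essentially the same technique as the paper: an exhaustive case analysis on the support, ruling out singletons and pairs and then solving the linear system for the full-support candidate. The one difference worth noting is that the paper does not prove Lemma~\ref{lem.D} directly; instead it proves the stronger Lemma~\ref{lem.EF}, which shows that for any small perturbations $\epsilon_i,\epsilon'_i$ the (possibly non-symmetric) game $(D_{\epsilon_1,\epsilon_2},D_{\epsilon'_1,\epsilon'_2}^T)$ has a \emph{unique} NE, not merely a unique symmetric one, and that it has full support. That extra generality is what the later arguments (Lemmas~\ref{lem.forward1}--\ref{lem.main} and Theorem~\ref{thm.npc}) actually use, so while your argument settles the stated lemma cleanly, you would still need to redo the analysis at the level of Lemma~\ref{lem.EF} to carry the rest of the paper through.
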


We will prove a stronger version of Lemma \ref{lem.D}. Given $0\le \epsilon_1, \epsilon_2 <<1$, define

\[D_{\epsilon_1,\epsilon_2}= \begin{bmatrix}
0 & 4& 0\\
2 & 0 & 4+\epsilon_1 \\
3 & 2+\epsilon_2 & 0\\
\end{bmatrix}\]

Proof of the following lemma, subsumes proof of Lemma \ref{lem.D}.

\begin{lemma}\label{lem.EF}
Consider the bimatrix game $(D_{\epsilon_1,\epsilon_2},{D_{\epsilon'_1,\epsilon'_2}}^T)$ where $0\le \epsilon_i,\epsilon'_i
<<1, i=1,2$. The game has a unique NE which has full support, and if $D_{\epsilon_1,\epsilon_2}=D_{\epsilon'_1,\epsilon'_2}$
then it is a symmetric NE.
\end{lemma}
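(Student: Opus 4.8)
\medskip\noindent\textbf{Proof idea.}
The plan is to exhibit one full-support equilibrium explicitly and then show it is the only Nash equilibrium. Write $A=D_{\epsilon_1,\epsilon_2}$; the column player's matrix is $D_{\epsilon'_1,\epsilon'_2}^{T}$, so in the complementarity conditions (\ref{eq.ne}) one has $(\xx^{T}D_{\epsilon'_1,\epsilon'_2}^{T})_j=(D_{\epsilon'_1,\epsilon'_2}\xx)_j$. For a full-support profile $(\xx,\yy)$, (\ref{eq.ne}) forces $(A\yy)_1=(A\yy)_2=(A\yy)_3=\pi_1$ and $(D_{\epsilon'_1,\epsilon'_2}\xx)_1=(D_{\epsilon'_1,\epsilon'_2}\xx)_2=(D_{\epsilon'_1,\epsilon'_2}\xx)_3=\pi_2$. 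Eliminating the scalars $\pi_1,\pi_2$ leaves, on each side, two linear equations which together with $\sum_i x_i=\sum_i y_i=1$ determine $\xx$ and $\yy$ uniquely; solving them shows the solutions $\xx^{*},\yy^{*}$ are small perturbations of $\n(2,3,2)$, hence strictly positive for $0\le\epsilon_i,\epsilon'_i\ll1$. Since all coordinates of $A\yy^{*}$ equal $\pi_1$ and all coordinates of $D_{\epsilon'_1,\epsilon'_2}\xx^{*}$ equal $\pi_2$, the pair $(\xx^{*},\yy^{*})$ satisfies (\ref{eq.ne}) and is a full-support Nash equilibrium; if $D_{\epsilon_1,\epsilon_2}=D_{\epsilon'_1,\epsilon'_2}$ the two linear systems coincide, so $\xx^{*}=\yy^{*}$ and the equilibrium is symmetric. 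Taking all four parameters to be $0$ recovers Lemma \ref{lem.D}.

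The second step is uniqueness, and the reduction I would use is the claim that every Nash equilibrium $(\xx,\yy)$ of the game satisfies $\yy>\zero$. Once that is known, complementarity for the column player gives $(D_{\epsilon'_1,\epsilon'_2}\xx)_j=\pi_2$ for all $j$, so by the uniqueness above $\xx=\xx^{*}$; as $\xx^{*}>\zero$, complementarity for the row player then gives $(A\yy)_i=\pi_1$ for all $i$, so $\yy=\yy^{*}$. Hence $(\xx^{*},\yy^{*})$ is the unique Nash equilibrium and it has full support.

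I expect the main obstacle to be establishing that $\yy>\zero$, i.e.\ ruling out $y_1=0$, $y_2=0$, and $y_3=0$. The plan for each case is a short chain of strict-dominance deductions exploiting the ``generalized Rock--Paper--Scissors'' pattern of $D$ (each column attains its maximum in a distinct row, cyclically): assuming $y_j=0$, some row of $A$ becomes strictly dominated over the face $\{\,\yy:y_j=0\,\}$ --- for instance $y_3=0$ makes row $2$ strictly worse than row $3$ for every feasible $\yy$, their difference being $y_1+(2+\epsilon_2)y_2>0$ --- which forces a coordinate of $\xx$ to vanish; this in turn makes a column of $D_{\epsilon'_1,\epsilon'_2}$ strictly dominated, forcing a further coordinate of $\yy$ to vanish, and after a bounded number of such steps one either reaches a direct contradiction or collapses to a pure profile. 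Pure profiles are then excluded because the column player's best responses to rows $1,2,3$ are columns $3,1,2$ and the row player's best responses to columns $1,2,3$ are rows $3,1,2$, whose composition is a $3$-cycle and hence has no fixed point. Carrying out the three cases completes the argument, and since every inequality used is strict it remains valid for all sufficiently small nonnegative $\epsilon_i,\epsilon'_i$.
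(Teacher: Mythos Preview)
Your proposal is correct and is, at its core, the same argument as the paper's: both rule out partial supports by chasing best-response implications around the cyclic ``rock--paper--scissors'' structure of $D$, and both then solve the full-support linear system to pin down the unique equilibrium. The paper organizes the support elimination as a case split on $|S_1|$ and $|S_2|$ (first $|S_i|=1$, then $|S_i|=2$, each leading to a cyclic contradiction such as $S_1=\{1,2\}\Rightarrow S_2=\{1,3\}\Rightarrow S_1\subset\{2,3\}$), whereas you organize it as ``assume $y_j=0$ and iterate dominance deductions until a pure profile is reached, then invoke the $3$-cycle on pure best responses.'' These are the same computations in a different order; your bootstrap from $\yy>\zero$ to $\xx=\xx^{*}$ via column complementarity is a small shortcut that spares you from repeating the case analysis on the row side. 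The one place you genuinely differ is the symmetry clause: you observe directly that when $D_{\epsilon_1,\epsilon_2}=D_{\epsilon'_1,\epsilon'_2}$ the two linear systems coincide, so $\xx^{*}=\yy^{*}$; the paper instead appeals to Nash's existence theorem for symmetric equilibria together with uniqueness. Your route is more self-contained.
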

\begin{proof}
Let $S_1$ and $S_2$ be the support sets of first and second player at a Nash equilibrium $(\xx,\yy)$. 
Recall that, we have $S_1 \subseteq \argmax_{1\leq i \leq 3} (D_{\epsilon_1,\epsilon_2}\yy)_i$ and $S_2 \subseteq \argmax_{1\leq j \leq
3} (\xx ^T{D_{\epsilon'_1,\epsilon'_2}}^T)_j$.

We will first show that $S_1=S_2=\{1,2,3\}$ for all NE of the game by discarding each of the case where $|S_i|=1$ and $|S_i|=2$ for
$i=1,2$. 
\medskip

\noindent\textbf{Case 1. $|S_1|=1$ or $|S_2|=1$.}

We will derive a contradiction for $|S_1|=1$, and the other case follows similarly. 
Suppose $S_1=\{1\}$, then $\xx=(1,0,0)$, and  
$\xx ^T{D_{\epsilon'_1,\epsilon'_2}}^T=\begin{bmatrix} 0 & 2 & 3 \end{bmatrix} \Rightarrow S_2\subset \{3\}$. Thus only best
response to $\xx$ is $\yy=(0, 0, 1)$, and it is easy to see
$((1,0,0),(0,0,1))$ is not a NE. Similarly if $S_1=\{2\}$ then to $\xx=(0,1,0)$ the only best response is $\yy=(1,0,0)$ which is not a NE,
and if $S_1=\{3\}$ then best response to $\xx=(0,0,1)$ is $\yy=(0,1,0)$, again not a NE.
\medskip

\noindent\textbf{Case 2. $|S_1|=2$ or $|S_2|=2$.} 

Again, we will derive a contradiction for $|S_1|=2$, and the other case follows similarly. 
Suppose $S_1=\{1,2\}$, then we have $\xx=(p_1,p_2,0)$ such that $p_1+p_2=1$, and  
 $$\xx ^TD_{\epsilon'_1,\epsilon'_2}^T = \begin{bmatrix}4p_2 &  2p_1&  3p_1+ (2+\epsilon'_2)p_2\end{bmatrix}.$$ Therefore, 
$S_2 = \{1,3\}$ in all NE which has $S_1=\{1,2\}$ (as $|S_2|>1$ due to Case 1), so let $\yy =(q_1,0,q_3)$.
In that case $D_{\epsilon_1,\epsilon_2}\yy=\begin{bmatrix}0 & 2q_1+(4+\epsilon_1)q_3&  3q_1\end{bmatrix}$, and therefore 
$S_1\subset \{2,3\}$, a contradiction. 
Similarly, $S_1=\{1,3\}\Rightarrow S_2=\{2,3\} \Rightarrow S_1\subset \{1,2\}$, and $S_1=\{2,3\}\Rightarrow
S_2=\{1,2\} \Rightarrow S_1\subset\{1,3\}$, contradictions.
\medskip

Thus the only possibility we are left with is $S_1=S_2=\{1,2,3\}$ for all NE of the game. Next we show that there is a unique NE with
this support. Let $((p_1,p_2,p_3),(q_1,q_2,q_3))$ be a NE of the game. Since
$S_2=\{1,2,3\}$, we have: \[4p_2=2p_1+(4+\epsilon'_1)p_3=3p_1+(2+\epsilon'_2)p_2,\ \ \ 
p_1+p_2+p_3=1 \]
Clearly, the only solution of the above equalities is
$p_1=\frac{2}{7}+\epsilon''_1,\ p_2=\frac{3}{7}+\epsilon''_2,\ p_3=\frac{2}{7}+\epsilon''_3$, where $\epsilon''_i >0,\ i=1,2,3$ depends
on $\epsilon_1$ and $\epsilon_2$. We can write similar equalities for $(q_1,q_2,q_3)$ using NE conditions for the first player, which
has a unique solution 
$q_1=\frac{2}{7}+\epsilon'''_1,\ q_2=\frac{3}{7}+\epsilon'''_2,\ q_3=\frac{2}{7}+\epsilon'''_3$. 

Since existence of symmetric NE in a symmetric game is guaranteed, if $D_{\epsilon_1,\epsilon_2}=D_{\epsilon'_1,\epsilon'_2}$, then 
the only NE of such a game is symmetric.
\end{proof}

As observed in Section \ref{sec.tech}, the well known reduction from a bimatrix game $(A,B)$ to symmetric game
$G=\left[\begin{array}{cc}0 & A \\ B^T & 0\end{array}\right]$ is not useful for our purpose.
This is because, non-symmetric NE of $(G, G^T)$ can be of the form $((\aa,\zeros),(\zeros,\bb))$, and therefore fails to produce more
than one NE of game $(A,B)$. Next we show how to circumvent this issue by constructing a suitable matrix $G$ using the game of Lemma
\ref{lem.EF} such that no component in non-symmetric NE is zero, and it relates to a unique pair of NE of game $(A,B)$. This one-to-one
correspondence gives $\#P$-hardness result as well. 

Recall the following, where $K_{c \times d}(k)$ is a $c \times d$ dimensional matrix with all entries set to $k \in \R$.

\[
K=\left[\begin{array}{ccc}
0 & K_{1\times m}(4) & K_{1\times n}(0) \\
K_{m \times 1}(2) & K_{m\times m}(0) & K_{m \times n}(4)\\
K_{n \times 1}(3) & K_{n\times m}(2) & K_{n\times n}(0)
\end{array}\right]
\ \ \ \  \ \ \ \ 
G=K+\left[\begin{array}{ccc}
0 & 0 & 0\\
0 & 0 & A\\
0 & B^T & 0
\end{array}\right]
\]

Before we go into proving our claims, we define a few terms and functions, to be used in the rest of the section.
For any non-zero, non-negative vector $\zz$, of any dimension, let $\n(\zz)$ denote the {\em normalized vector}, i.e., its components are
non-negative and add to $1$. For a matrix $M$ and two non-zero vectors $\zz_1, \zz_2 \ge 0$ of appropriate dimensions, we define,
\[
P(M;\zz_1,\zz_2)\defeq\n(\zz_1)^TM \n(\zz_2) 
\]

{\em i.e.,} the payoffs obtained by player with payoff matrix $M$ 
at strategy profile $(\n(\zz_1),\n(\zz_2))$.
Given a strategy profile $(\xx,\yy)$ of game $(G,G^T)$, where $\xx=(c,\aa,\bb)$ and $\yy=(c',\aa',\bb')$, or given NE $(\aa,\bb')$ and
$(\aa',\bb)$ for game $(A,B)$, define
\begin{equation}\label{eq.e}
\epsilon_1\defeq P(A;\aa,\bb'),\ \ \ \epsilon_2\defeq P(B;\aa',\bb),\ \ \ \epsilon'_1\defeq P(A;\aa',\bb)\ \ \
\epsilon'_2\defeq P(B;\aa,\bb')
\end{equation}

For $\xx=(c,\aa,\bb)$, recall that $\cl(\xx)\defeq (c,\sum_i a_i, \sum_j b_j)$.  
Next we show a connection between payoffs in game $(G,G^T)$ and in game $(D_{\epsilon_1,\epsilon_2},D_{\epsilon'_1,\epsilon'_2}^T)$.

\begin{lemma}\label{lem.payoff}
$P(G;\xx,\yy)=P(D_{\epsilon_1,\epsilon_2};\cl(\xx),\cl(\yy))$, and
$P(G^T;\xx,\yy)=P(D_{\epsilon'_1,\epsilon'_2}^T;\cl(\xx),\cl(\yy))$.
\end{lemma}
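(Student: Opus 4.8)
The plan is to expand $P(G;\xx,\yy) = \n(\xx)^T G \n(\yy)$ directly from the block structure of $G = K + (\text{the } A,B\text{ block matrix})$, and match it term-by-term against the block structure of $D_{\epsilon_1,\epsilon_2}$ acting on $\cl(\xx)$, $\cl(\yy)$. Write $\xx = (c,\aa,\bb)$ and $\yy = (c',\aa',\bb')$, and let $s = \sum_i a_i$, $t = \sum_j b_j$ (so $\cl(\xx) = (c,s,t)$), and similarly $s' = \sum_i a'_i$, $t' = \sum_j b'_j$ for $\cl(\yy)$. Let $\sigma = c + s + t$ be the common normalizing sum for $\xx$ (and $\sigma' = c' + s' + t'$ for $\yy$); note $\sigma$ is also the normalizing sum for $\cl(\xx)$, since collapsing preserves the total mass. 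So $\n(\xx) = \xx/\sigma$ and $\n(\cl(\xx)) = \cl(\xx)/\sigma$, and likewise for $\yy$. Thus it suffices to show $\xx^T G \yy = \cl(\xx)^T D_{\epsilon_1,\epsilon_2} \cl(\yy)$ as unnormalized bilinear forms, since both sides then get divided by the same $\sigma\sigma'$.

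The first step is to compute $\xx^T K \yy$ using the constant-block structure: a block $K_{p\times q}(k)$ contributes $k \cdot (\text{sum of the relevant } \xx\text{-entries}) \cdot (\text{sum of the relevant } \yy\text{-entries})$. Carrying this out block by block gives
\[
\xx^T K \yy = 4\,c\,s' + 0\,c\,t' + 2\,s\,c' + 0\,s\,s' + 4\,s\,t' + 3\,t\,c' + 2\,t\,s' + 0\,t\,t',
\]
which one recognizes as exactly $\cl(\xx)^T D\, \cl(\yy)$ with $D$ the unperturbed matrix (the entries $0,4,0;\,2,0,4;\,3,2,0$ appear in the right positions against $(c,s,t)$ and $(c',s',t')$). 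The second step is to handle the extra $A,B$ block: $\xx^T \begin{bmatrix}0&0&0\\0&0&A\\0&B^T&0\end{bmatrix}\yy = \aa^T A \bb' + \bb^T B^T \aa' = \aa^T A \bb' + \aa'^T B \bb$. Using the definitions $\epsilon_1 = P(A;\aa,\bb') = \n(\aa)^T A\,\n(\bb') = \aa^T A \bb'/(s t')$ and $\epsilon_2 = P(B;\aa',\bb) = \aa'^T B \bb/(s' t)$, this extra term equals $\epsilon_1\, s\, t' + \epsilon_2\, s'\, t$. Adding this to the $K$-contribution, the coefficient of $s t'$ becomes $4 + \epsilon_1$ and the coefficient of $t s'$ becomes $2 + \epsilon_2$, which is precisely $\cl(\xx)^T D_{\epsilon_1,\epsilon_2}\, \cl(\yy)$. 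Dividing by $\sigma\sigma'$ gives $P(G;\xx,\yy) = P(D_{\epsilon_1,\epsilon_2};\cl(\xx),\cl(\yy))$. The second identity follows by the symmetric computation: $\xx^T G^T \yy = \yy^T G \xx$ involves $\aa'^T A \bb$ and $\aa^T B \bb'$, which produce the perturbations $\epsilon'_1 = P(A;\aa',\bb)$ in position $(2,3)$ and $\epsilon'_2 = P(B;\aa,\bb')$ in position $(3,2)$ of the transposed matrix, matching $D_{\epsilon'_1,\epsilon'_2}^T$.

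I expect the only real care-point — rather than a deep obstacle — to be bookkeeping: making sure the block sums land against the correct coordinates of $\cl(\xx)$ and $\cl(\yy)$ (in particular that $A$ sits in the $(\aa,\bb')$ cross-term and $B^T$ in the $(\bb,\aa')$ cross-term, so that $\epsilon_1$ lands in the $(2,3)$ slot and $\epsilon_2$ in the $(3,2)$ slot, with the roles swapped and transposed for $G^T$), and confirming the normalizing sums genuinely agree ($c + \sum a_i + \sum b_j$ is unchanged by $\cl$). There is also the implicit nondegeneracy point that $P(\cdot;\cdot,\cdot)$ is only defined for nonzero arguments, so the statement is understood to apply when $\aa,\bb,\aa',\bb'$ (and hence $\cl(\xx),\cl(\yy)$) are all nonzero; this is harmless here because it is exactly the regime in which the lemma is later used. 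Once the blocks are lined up, everything is a one-line bilinear expansion.
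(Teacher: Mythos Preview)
Your proof is correct and follows essentially the same route as the paper: a block-by-block bilinear expansion of $\xx^T G\yy = \xx^T K\yy + \aa^T A\bb' + {\aa'}^T B\bb$, identifying the $K$-part with $\cl(\xx)^T D\,\cl(\yy)$ and the $A,B$-part with the $\epsilon_1,\epsilon_2$ perturbations. If anything, you are slightly more explicit than the paper about why the normalizations on both sides agree (the paper tacitly treats $\xx,\yy$ as already normalized), and you correctly flag the implicit nonzeroness assumption on $\aa,\bb,\aa',\bb'$ needed for the $\epsilon$'s to be defined.
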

\begin{proof}
We will prove the first part, and the second part follows similarly.
Let $\vv=(v_1, v_2, v_3)=\cl(\xx)$ and $\ww = (w_1, w_2, w_3)=\cl(\yy)$. Then, 
$$P(G;\xx,\yy)=\xx^TG\yy=\xx^TK\yy+\xx^T\begin{bmatrix}
0 & 0 & 0\\
0 & 0 & A\\
0 & B^T & 0
\end{bmatrix}\yy = \xx^TK\yy + \aa^TA\bb'+{\aa'}^TB\bb$$
where, $\xx^TK\yy=2v_2w_1+3v_3w_1+4v_1w_2+2v_3w_2+4v_2w_3$. 
Note that $\aa=(\sum_{i \leq m} {a_i}) *\n(\aa)=v_2\n(\aa)$, and similarly $\bb=v_3\n(\bb)$, $\aa'=w_2\n(\aa')$ and $\bb'=w_3\n(\bb')$.
Thus, $\aa^TA\bb'+{\aa'}^TB\bb=\n(\aa)^TA\n(\bb')v_2w_3+ {\n(\aa')}^TB\n(\bb)v_3w_2$ and hence
$$P(G;\xx,\yy) = 2v_2w_1+3v_3w_1+4v_1w_2+2v_3w_2+4v_2w_3+\n(\aa)^TA\n(\bb')v_2w_3+
{\n(\aa')}^TB\n(\bb)v_3w_2.$$
On the other hand we have $P(D_{\epsilon_1,\epsilon_2};\cl(\xx),\cl(\yy)) =2v_2w_1+3v_3w_1+4v_1w_2+2v_3w_2+4v_2w_3+\epsilon_1v_2w_3
+\epsilon_2v_3w_2.$
Since, $\epsilon_1=P(A;\aa,\bb')=\n(\aa)^TA\n(\bb')$ and $\epsilon_2=P(B;\aa',\bb)=\n(\aa')^TB\n(\bb)$ the lemma follows.
\end{proof}

Lemma \ref{lem.payoff} implies equivalence between payoffs in games $(G,G^T)$ and
$(D_{\epsilon_1,\epsilon_2},{D_{\epsilon'_1,\epsilon'_2}}^T)$, when the strategies are mapped appropriately. Using this, next we
establish relation between their NE.

\begin{lemma}\label{lem.forward1}
If $(\xx,\yy)$ is a NE for the game $(G,G^T)$ then $(\cl(\xx),\cl(\yy))$ is a NE of game
$(D_{\epsilon_1,\epsilon_2},{D_{\epsilon'_1,\epsilon'_2}}^T)$, where $\epsilon$s are defined as per (\ref{eq.e}).  
\end{lemma}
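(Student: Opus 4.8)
The plan is to verify directly, from the complementarity characterization~(\ref{eq.ne}) of Nash equilibrium, that the collapsed profile is an equilibrium of the $3\times 3$ game. Write $\xx=(c,\aa,\bb)$, $\yy=(c',\aa',\bb')$, $\vv=\cl(\xx)=(v_1,v_2,v_3)$ and $\ww=\cl(\yy)=(w_1,w_2,w_3)$, so that $v_1=c$, $v_2=\sum_i a_i$, $v_3=\sum_j b_j$ (and likewise for $\ww$); note $\vv,\ww$ lie in the $2$-simplex because the components of $\xx,\yy$ sum to $1$. Let $\pi=\xx^TG\yy=\max_i (G\yy)_i$ be the first player's payoff at the NE $(\xx,\yy)$ (the second equality because the first player best-responds). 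It suffices to show, for $r=1,2,3$, that $(D_{\epsilon_1,\epsilon_2}\ww)_r\le\pi$ with equality whenever $v_r>0$, and the symmetric statement for the second player (whose payoff against $\vv$ in $(D_{\epsilon_1,\epsilon_2},{D_{\epsilon'_1,\epsilon'_2}}^T)$ is $D_{\epsilon'_1,\epsilon'_2}\vv$, with $\epsilon'_1=P(A;\aa',\bb)$ and $\epsilon'_2=P(B;\aa,\bb')$ being exactly the parameters that make the arithmetic match).

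The heart of the argument is one computation. Partition the first player's pure strategies in $G$ into three blocks $\{0\}$, the $A$-block (the $m$ coordinates of $\aa$), and the $B$-block (the $n$ coordinates of $\bb$) --- these are exactly the three coordinates of $\cl(\cdot)$. Reading off the rows of $K$ and of $G=K+[\ldots]$ gives
\[
(G\yy)_0=4w_2,\qquad (G\yy)_{k\text{-th }A\text{-row}}=2w_1+4w_3+(A\bb')_k,\qquad (G\yy)_{\ell\text{-th }B\text{-row}}=3w_1+2w_2+({\aa'}^TB)_\ell ,
\]
whereas $(D_{\epsilon_1,\epsilon_2}\ww)_1=4w_2$, $(D_{\epsilon_1,\epsilon_2}\ww)_2=2w_1+4w_3+\epsilon_1 w_3$, $(D_{\epsilon_1,\epsilon_2}\ww)_3=3w_1+2w_2+\epsilon_2 w_2$. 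Using $\aa=v_2\,\n(\aa)$, $\bb'=w_3\,\n(\bb')$, etc., together with $\epsilon_1=\n(\aa)^TA\n(\bb')$ and $\epsilon_2=\n(\aa')^TB\n(\bb)$ (so $\epsilon_1 w_3=\n(\aa)^TA\bb'$ and $\epsilon_2 w_2={\aa'}^TB\n(\bb)$), one checks that $(D_{\epsilon_1,\epsilon_2}\ww)_r$ is precisely the weighted average of $\{(G\yy)_i : i\in\text{block }r\}$ with weights $\n(\xx|_{\text{block }r})$ --- the first player's strategy restricted to that block and normalized. Since each $(G\yy)_i\le\pi$, this average is $\le\pi$; and when $v_r>0$ the weights are supported on indices $i$ with $x_i>0$, which are best responses, so every such $(G\yy)_i=\pi$ and the average equals $\pi$. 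This is exactly the complementarity condition. (Lemma~\ref{lem.payoff} is the consistency check that the resulting value $\vv^TD_{\epsilon_1,\epsilon_2}\ww$ is $\pi$.)

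The one point needing care is the case where some of $\aa,\bb,\aa',\bb'$ is the zero vector, for then the corresponding $\epsilon$ is not defined by~(\ref{eq.e}). I would handle this by setting such an $\epsilon$ to $0$: the ``weighted average'' above degenerates, but because $A,B>0$ one still has $(D_{\epsilon_1,\epsilon_2}\ww)_r\le (G\yy)_i\le\pi$ for the relevant $i$ (and trivially equality is not required since $v_r=0$), so $(\cl(\xx),\cl(\yy))$ is still a NE of some $(D_{\epsilon_1,\epsilon_2},{D_{\epsilon'_1,\epsilon'_2}}^T)$ with all parameters in $[0,1)$ and $\ll 1$ (each being an average of entries of $A$ or $B$). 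But then Lemma~\ref{lem.EF} forces that NE to have full support, i.e. $v_1,v_2,v_3,w_1,w_2,w_3>0$, so in fact $c,c',\aa,\bb,\aa',\bb'$ are all nonzero --- which retroactively shows no vector was zero and that the $\epsilon$'s of~(\ref{eq.e}) are legitimately defined. So the main obstacle is really bookkeeping: ruling out degenerate supports (via the bootstrap off Lemma~\ref{lem.EF}) and keeping straight which pair of the four vectors each of $\epsilon_1,\epsilon_2,\epsilon'_1,\epsilon'_2$ is built from; the payoff arithmetic itself is routine given the block structure of $K$.
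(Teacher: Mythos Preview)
Your argument is correct. The paper's own proof is organized differently: it goes by contradiction, assuming that some player (say the first) has a profitable deviation $\vv'$ from $\cl(\xx)$ in the $3\times 3$ game, and then lifts this to a profitable deviation $\xx'=(v'_1,\,v'_2\,\n(\aa),\,v'_3\,\n(\bb))$ in $(G,G^T)$ by invoking Lemma~\ref{lem.payoff} (the payoff-equivalence lemma) as a black box, noting that the $\epsilon$'s are unchanged under this lift. You instead verify the complementarity conditions~(\ref{eq.ne}) directly, by computing $(G\yy)$ blockwise and recognising each $(D_{\epsilon_1,\epsilon_2}\ww)_r$ as the $\n(\xx|_{\text{block }r})$-average of the corresponding block of $(G\yy)$. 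The underlying arithmetic is the same; what differs is that the paper packages it once in Lemma~\ref{lem.payoff} and then argues by contradiction in two lines, while you unfold the computation in place and argue forward.

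One thing your write-up handles more carefully than the paper is the degenerate case where some of $\aa,\bb,\aa',\bb'$ vanishes, so that~(\ref{eq.e}) is a priori undefined. The paper silently assumes these vectors are nonzero (indeed its construction $\xx'=(v'_1,\,v'_2\,\n(\aa),\,v'_3\,\n(\bb))$ would be ill-formed otherwise) and only establishes nonzeroness afterwards as Corollary~\ref{corr}, which is mildly circular. Your bootstrap---provisionally setting the offending $\epsilon$'s to $0$, observing that the inequality side of~(\ref{eq.ne}) still goes through because $A,B>0$ while the equality side is vacuous on the zero block, and then invoking Lemma~\ref{lem.EF} to force full support---cleanly closes this gap.
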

\begin{proof}
To the contrary suppose $(\vv,\ww)=(\cl(\xx),\cl(\yy))$ is not a NE of game
$(D_{\epsilon_1,\epsilon_2},{D_{\epsilon'_1,\epsilon'_2}}^T)$. Without loss of generality (wlog) suppose first player can deviate to
$\vv'$ and get a better payoff. Let $\xx'=(v'_1,v'_2\n(\aa),v'_3\n(\bb))$. 
We claim that in game $(G,G^T)$ first player can deviate to $\xx'$ and gain, i.e., $P(G;\xx',\yy) > P(G;\xx,\yy)$, contradicting
$(\xx,\yy)$ being NE.

Note that the values of $\epsilon_1,\epsilon_2,\epsilon'_1,\epsilon'_2$ for $(\xx,\yy)$ and $(\xx',\yy)$
are the same (see (\ref{eq.e})). Then, using Lemma \ref{lem.payoff} we have:
$P(G;\xx,\yy)=P(D_{\epsilon_1,\epsilon_2};\vv,\ww) < P(D_{\epsilon_1,\epsilon_2};\vv',\ww)=P(G;\xx',\yy)$.  
\end{proof}

The next corollary follows using Lemmas \ref{lem.EF} and \ref{lem.forward1}.

\begin{corollary}\label{corr}
If $(\xx,\yy)$ is NE for the game $(G,G^T)$, where $\xx=(c,\aa,\bb)$ and $\yy=(c',\aa',\bb')$, then vectors $\aa,\aa',\bb,\bb'$ are
non-zero.  
\end{corollary}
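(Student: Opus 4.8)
The plan is to combine Lemma~\ref{lem.forward1}, which says that $(\cl(\xx),\cl(\yy))$ is a NE of the perturbed $3\times 3$ game $(D_{\epsilon_1,\epsilon_2},{D_{\epsilon'_1,\epsilon'_2}}^T)$, with Lemma~\ref{lem.EF}, which says that such a perturbed game has a \emph{unique} NE and that this NE has \emph{full support} (assuming the entries of $A$ and $B$, and hence the relevant $\epsilon$'s, are nonnegative and $<<1$, as arranged by the construction). First I would write $\vv=(v_1,v_2,v_3)=\cl(\xx)$ and $\ww=(w_1,w_2,w_3)=\cl(\yy)$, so that by definition of $\cl$ we have $v_2=\sum_i a_i$, $v_3=\sum_j b_j$, $w_2=\sum_i a'_i$, $w_3=\sum_j b'_j$, with all of $\xx,\yy$ lying in the simplex and all entries nonnegative. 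Then by Lemma~\ref{lem.forward1}, $(\vv,\ww)$ is a NE of $(D_{\epsilon_1,\epsilon_2},{D_{\epsilon'_1,\epsilon'_2}}^T)$, and by Lemma~\ref{lem.EF} this NE has full support, i.e. $v_i>0$ and $w_i>0$ for all $i\in\{1,2,3\}$. In particular $v_2,v_3,w_2,w_3>0$.

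From $v_2=\sum_i a_i>0$ and $a_i\ge 0$ we conclude $\aa\neq\zeros$; from $v_3=\sum_j b_j>0$ we get $\bb\neq\zeros$; from $w_2=\sum_i a'_i>0$ we get $\aa'\neq\zeros$; and from $w_3=\sum_j b'_j>0$ we get $\bb'\neq\zeros$. This is exactly the claim. One small caveat to check is the hypothesis of Lemma~\ref{lem.EF}: it requires $0\le\epsilon_i,\epsilon'_i<<1$, and by \eqref{eq.e} these are quantities like $P(A;\aa,\bb')=\n(\aa)^TA\n(\bb')$; these are well-defined precisely when the relevant vectors are nonzero, and they are genuinely in $[0,1)$ because we assumed w.l.o.g.\ that the entries of $A$ and $B$ are positive and $<<1$ (a convex combination of entries $<<1$ is still $<<1$). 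If some of the vectors used to define an $\epsilon$ were zero one would pick an arbitrary convention (e.g.\ set that $\epsilon$ to $0$), which does not affect the argument since Lemma~\ref{lem.EF} holds for the full range $0\le\epsilon_i,\epsilon'_i<<1$.

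The main ``obstacle'' here is really just making sure the chain of implications is airtight rather than any hard computation: the work was already done in Lemmas~\ref{lem.EF}, \ref{lem.payoff}, and \ref{lem.forward1}. The only thing one must be careful about is the direction of the deduction --- full support of the \emph{collapsed} NE forces each block-sum to be strictly positive, which in turn (using nonnegativity of the coordinates) forces each block to be a nonzero vector. So the proof is essentially a two-line invocation: apply Lemma~\ref{lem.forward1} to get a NE of the perturbed $3\times 3$ game, apply Lemma~\ref{lem.EF} to learn it has full support, and read off that $\sum_i a_i,\sum_j b_j,\sum_i a'_i,\sum_j b'_j$ are all strictly positive, hence $\aa,\bb,\aa',\bb'$ are all nonzero.
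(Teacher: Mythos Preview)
Your proposal is correct and follows essentially the same route as the paper, which simply states that the corollary follows from Lemmas~\ref{lem.EF} and~\ref{lem.forward1}. Your additional remark about the well-definedness of the $\epsilon$'s (and the harmless convention of setting them to $0$ when a block is zero, since the corresponding term in Lemma~\ref{lem.payoff} is then multiplied by a zero block-sum) is a detail the paper glosses over, and it is good that you noted it.
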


As was our goal, the above corollary establishes non-zeroness of sub-components of a NE $(\xx,\yy)$ of the symmetric game $(G,G^T)$
that we constructed from $(A,B)$. Using this property we will show how non-symmetric NE of game $(G,G^T)$ give two distinct NE of game
$(A,B)$ and vice-versa.

\begin{lemma}\label{lem.forward2}
If $(\xx,\yy)$ is a NE for the game $(G,G^T)$, where $\xx=(c,\aa,\bb)$ and $\yy=(c',\aa',\bb')$, then $(\n(\aa),\n(\bb'))$ and
$(\n(\aa'),\n(\bb))$ both are NE for the game $(A,B)$.  
\end{lemma}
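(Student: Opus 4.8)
The plan is to unpack the Nash equilibrium conditions \eqref{eq.ne} for the symmetric game $(G,G^T)$ applied to the profile $(\xx,\yy)$, and to read off from them the corresponding conditions \eqref{eq.ne} for the game $(A,B)$ on the profiles $(\n(\aa),\n(\bb'))$ and $(\n(\aa'),\n(\bb))$. By Corollary \ref{corr} every one of $\aa,\aa',\bb,\bb'$ is nonzero, so the normalizations $\n(\aa),\n(\aa'),\n(\bb),\n(\bb')$ are all well defined, and $(\n(\aa),\n(\bb'))$ and $(\n(\aa'),\n(\bb))$ are genuine mixed strategy profiles for $(A,B)$. I would treat the first profile $(\n(\aa),\n(\bb'))$ in detail; the second follows by the analogous computation with the roles of $\xx$ and $\yy$ (equivalently $G$ and $G^T$) interchanged.

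First I would write out $(G\yy)$ block by block. Since $\yy=(c',\aa',\bb')$ and using the block structure of $G=K+[\,\ldots\,]$, the entries of $G\yy$ corresponding to the ``$\bb$-block'' of rows (i.e. the last $n$ coordinates) are $3c' + 2(\sum_i a'_i) + (B^T\aa')_\ell$ for $\ell\in S_2$; that is, they equal a constant (independent of $\ell$) plus the $\ell$-th entry of $B^T\aa' = (\aa'^T B)^T$. The complementarity part of \eqref{eq.ne} for $(\xx,\yy)$ on $(G,G^T)$ says that whenever a coordinate in the $\bb$-block of $\xx$ is positive, the corresponding entry of $G\yy$ is maximal \emph{among all $1+m+n$ coordinates}, hence in particular maximal among the $n$ coordinates of the $\bb$-block. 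Subtracting the common constant, this says: for $\ell$ with $b_\ell>0$, $(\aa'^T B)_\ell = \max_{k\in S_2}(\aa'^TB)_k$. Since $\n(\bb)$ is just $\bb$ rescaled, this is exactly the second-player complementarity condition of \eqref{eq.ne} for the game $(A,B)$ at the profile $(\n(\aa'),\n(\bb))$: the support of $\n(\bb)$ lies in $\argmax_k(\n(\aa')^T B)_k$. Symmetrically (reading the $\aa$-block of rows of $G^T\xx$), the complementarity condition forced on the $\aa'$-coordinates of $\yy$ gives that the support of $\n(\aa')$ lies in $\argmax_i(A\,\n(\bb))_i$, which is the first-player condition for $(A,B)$ at $(\n(\aa'),\n(\bb))$. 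Together these show $(\n(\aa'),\n(\bb))$ is a NE of $(A,B)$; running the same argument on the $\aa$-block of $G\yy$ and the $\bb'$-block of $G^T\xx$ yields that $(\n(\aa),\n(\bb'))$ is a NE of $(A,B)$.

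The one genuinely delicate point — and the step I would be most careful about — is the direction of the inequalities: the NE conditions on $(G,G^T)$ only assert that the payoff entries in the $\bb$-block are maximal over \emph{all} coordinates of $G\yy$, which a priori is a weaker statement about which $\bb$-coordinates can be in the support than one might want. What rescues the argument is precisely that the common additive constant $3c' + 2(\sum_i a'_i)$ is the same for every coordinate within the block, so that ``maximal over the whole vector'' implies ``maximal within the block,'' and the latter is all that \eqref{eq.ne} for $(A,B)$ requires. One should also note that the entries of $A,B$ being positive and $\ll 1$ (the standing assumption from Section \ref{sec.tech}) guarantees no degeneracies creep in, though for this lemma the positivity is mainly needed so that the $\epsilon$'s in \eqref{eq.e} are genuinely small and $D_{\epsilon_1,\epsilon_2}$ is a small perturbation of $D$ — it is Corollary \ref{corr} that does the real work here. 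I would close by remarking that the feasibility constraints ($\n(\aa)\in X$, $\n(\bb')\in Y$, etc.) are automatic once the vectors are known nonzero, so the verification reduces entirely to the complementarity conditions just checked.
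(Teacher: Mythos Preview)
Your argument is correct, and it takes a genuinely different route from the paper's. The paper argues by contradiction at the level of aggregate payoffs: assuming $(\n(\aa),\n(\bb'))$ is not a NE, it builds a deviating strategy $\xx'=(c,\aa'',\bb)$ and uses the explicit payoff formula from Lemma~\ref{lem.payoff} to show $P(G;\xx',\yy)>P(G;\xx,\yy)$, contradicting that $(\xx,\yy)$ is a NE. You instead verify the complementarity conditions \eqref{eq.ne} for $(A,B)$ directly, by writing $G\yy$ (and the second player's payoff vector against $\xx$) block by block and observing that within each block the entries differ only by the relevant $A\bb'$ or $B^T\aa'$ term plus a common additive constant. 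Your approach is more hands-on and bypasses Lemma~\ref{lem.payoff} entirely; the paper's is shorter because it recycles machinery already built, but it hides the coordinate-level mechanism that you make explicit. Both rely on Corollary~\ref{corr} in the same essential way.

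One small notational slip: when you write ``the $\aa$-block of rows of $G^T\xx$'' and ``the $\bb'$-block of $G^T\xx$,'' the vector you actually want is $G\xx$, not $G^T\xx$. The second player's payoff from pure strategy $j$ in the game $(G,G^T)$ is $(\xx^T G^T)_j=(G\xx)_j$, so the relevant payoff vector is $G\xx$. Your subsequent conclusions are correct once this is fixed, since the block entries of $G\xx$ have exactly the form you need (e.g., the $\aa$-block entries are $2c+4\sum_j b_j+(A\bb)_i$).
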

\begin{proof}
We will show that $(\n(\aa),\n(\bb'))$ is NE for the game $(A,B)$, and the proof for $(\n(\aa'),\n(\bb))$ is analogous.
By contradiction, wlog suppose the first player can change $\n(\aa)$ to $\n(\aa'')$ and get a better payoff, where  
$\sum_{1\leq i \leq m} a_i=\sum_{1\leq i \leq m} a''_i$. 
Then, we have $\cl(\xx)=\cl(\xx')$ and $\n(\aa)^TA\n(\bb')<\n(\aa'')^TA\n(\bb')$. 

Let $\xx'=(c,\aa'',\bb)$. In the proof of Lemma \ref{lem.payoff} we
showed that for $\vv=\cl(\xx)$ and $\ww=\cl(\yy)$, 
$P(G;\xx,\yy)=2v_2w_1+3v_3w_1+4v_1w_2+2v_3w_2+4v_2w_3+\n(\aa)^TA\n(\bb')v_2w_3+ {\n(\aa')}^TB\n(\bb)v_3w_2.$
Then, 

\[\begin{array}{lcl}
P(G;\xx,\yy)-P(G,\xx',\yy)& =& \n(\aa)^TA\n(\bb')v_2w_3+ {\n(\aa')}^TB\n(\bb)v_3w_2-\\
& & \ \ \ \ \  \n(\aa'')^TA\n(\bb')v_2w_3-{\n(\aa')}^TB\n(\bb)v_3w_2\\
& = &\n(\aa)^TA\n(\bb')v_2w_3-\n(\aa'')^TA\n(\bb')v_2w_3<0
\end{array}
\]
A contradiction to $(\xx,\yy)$ being a NE of $(G,G^T)$. 
\end{proof}

Next we prove the reverse of Lemmas \ref{lem.forward1} and \ref{lem.forward2}. 

\begin{lemma}\label{lem.main}
If $(\aa,\bb')$ and $(\aa',\bb)$ are NE of game $(A,B)$, and if for $\epsilon$s defined in (\ref{eq.e}), $(\vv,\ww)$
is a NE of game $(D_{\epsilon_1,\epsilon_2},D_{\epsilon'_1,\epsilon'_2})$, then
$((v_1,v_2*\aa,v_3*\bb),(w_1,w_2*\aa',w_3*\bb'))$ is a NE of game $(G,G^T)$.
\end{lemma}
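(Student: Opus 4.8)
The plan is to verify directly that the profile $\xx=(v_1,v_2\aa,v_3\bb)$, $\yy=(w_1,w_2\aa',w_3\bb')$ satisfies the best-response condition of each player in $(G,G^T)$, reusing the block-matrix computation from the proof of Lemma~\ref{lem.payoff}. First I would record the routine facts: since $\aa,\aa'$ and $\bb,\bb'$ are mixed strategies of the two players of $(A,B)$ and $(\vv,\ww)$ is a strategy profile of the $3\times 3$ game, $\xx$ and $\yy$ are mixed strategies of $(G,G^T)$ with $\cl(\xx)=\vv$, $\cl(\yy)=\ww$, and normalized blocks exactly $\aa,\bb,\aa',\bb'$. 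Hence the parameters produced by (\ref{eq.e}) for $(\xx,\yy)$ coincide with the $\epsilon_1,\epsilon_2,\epsilon'_1,\epsilon'_2$ of the hypothesis, so Lemma~\ref{lem.payoff} applies and gives $P(G;\xx,\yy)=P(D_{\epsilon_1,\epsilon_2};\vv,\ww)$ and $P(G^T;\xx,\yy)=P(D_{\epsilon'_1,\epsilon'_2}^T;\vv,\ww)$.

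For player~$1$, fix an arbitrary mixed strategy $\xx''=(c'',\aa'',\bb'')$ of $(G,G^T)$ and expand $P(G;\xx'',\yy)$ exactly as in the proof of Lemma~\ref{lem.payoff}: since the blocks of $K$ are constant, the $K$-part collapses and the rest contributes $w_3\,{\aa''}^TA\bb'+w_2\,{\aa'}^TB\bb''$, so that $P(G;\xx'',\yy)=P(D;\cl(\xx''),\ww)+w_3\,{\aa''}^TA\bb'+w_2\,{\aa'}^TB\bb''$ for the unperturbed matrix $D$. Because $\aa'',\bb''$ need not be normalized I cannot collapse to the $3\times 3$ game outright; instead I would use the Nash conditions of $(A,B)$ as one-sided bounds. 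Writing $v''_2=\sum_i a''_i\ge 0$ and $v''_3=\sum_j b''_j\ge 0$ (if such a sum is $0$ the corresponding block is $\zeros$ and the bound below is trivial), the fact that $(\aa,\bb')$ is a NE of $(A,B)$ gives $\n(\aa'')^TA\bb'\le \aa^TA\bb'=\epsilon_1$, hence ${\aa''}^TA\bb'\le v''_2\epsilon_1$, and the fact that $(\aa',\bb)$ is a NE gives ${\aa'}^TB\bb''\le v''_3\epsilon_2$. Since $D_{\epsilon_1,\epsilon_2}$ is $D$ with $\epsilon_1$ added in entry $(2,3)$ and $\epsilon_2$ in entry $(3,2)$, substituting yields $P(G;\xx'',\yy)\le P(D_{\epsilon_1,\epsilon_2};\cl(\xx''),\ww)$. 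As $\cl(\xx'')$ is a mixed strategy of the $3\times 3$ game, player~$1$'s Nash condition for $(\vv,\ww)$ in $(D_{\epsilon_1,\epsilon_2},D_{\epsilon'_1,\epsilon'_2}^T)$ gives $P(D_{\epsilon_1,\epsilon_2};\cl(\xx''),\ww)\le P(D_{\epsilon_1,\epsilon_2};\vv,\ww)=P(G;\xx,\yy)$, the last equality by Lemma~\ref{lem.payoff}. This is the required inequality $P(G;\xx'',\yy)\le P(G;\xx,\yy)$.

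Player~$2$ is symmetric: for an arbitrary mixed strategy $\yy''=(d'',\aa''',\bb''')$, the same bookkeeping with the transposed blocks gives $P(G^T;\xx,\yy'')=P(D^T;\cl(\xx),\cl(\yy''))+v_3\,{\aa'''}^TA\bb+v_2\,\aa^TB\bb'''$; the Nash conditions of $(\aa',\bb)$ and $(\aa,\bb')$ bound ${\aa'''}^TA\bb\le(\sum_i a'''_i)\,\epsilon'_1$ and $\aa^TB\bb'''\le(\sum_j b'''_j)\,\epsilon'_2$, and since $D_{\epsilon'_1,\epsilon'_2}^T$ is $D^T$ with $\epsilon'_1$ added in entry $(3,2)$ and $\epsilon'_2$ in entry $(2,3)$, this yields $P(G^T;\xx,\yy'')\le P(D_{\epsilon'_1,\epsilon'_2}^T;\cl(\xx),\cl(\yy''))$. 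Player~$2$'s Nash condition for $(\vv,\ww)$ then gives $P(D_{\epsilon'_1,\epsilon'_2}^T;\cl(\xx),\cl(\yy''))\le P(D_{\epsilon'_1,\epsilon'_2}^T;\vv,\ww)=P(G^T;\xx,\yy)$. Having both best-response inequalities, $(\xx,\yy)$ is a NE of $(G,G^T)$.

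I expect the one real subtlety to be this passage from equalities to inequalities. In Lemmas~\ref{lem.payoff} and~\ref{lem.forward1} the collapse $\xx\mapsto\cl(\xx)$ was payoff-exact because the $\epsilon$-parameters were held fixed, but an arbitrary deviation $\xx''$ rescrambles the normalized blocks and would change those parameters, so the exact identity must be weakened to the one-sided bounds ${\aa''}^TA\bb'\le v''_2\epsilon_1$, etc. These bounds are precisely where the hypothesis that $(\aa,\bb')$ and $(\aa',\bb)$ are Nash equilibria of $(A,B)$ — rather than arbitrary strategy profiles — gets used, and they point in the right direction only because $v''_2,v''_3\ge 0$; everything else is the block arithmetic already carried out for Lemma~\ref{lem.payoff}.
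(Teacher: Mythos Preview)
Your argument is correct. It differs from the paper's in organization rather than substance, but the difference is worth noting. The paper verifies the pure-strategy best-response conditions coordinatewise: it writes out $(G\yy)_i$ explicitly on each of the three blocks, then invokes the full-support property of the $3\times 3$ equilibrium (Lemma~\ref{lem.EF}) to conclude that the three block maxima $4w_2$, $2w_1+(4+\epsilon_1)w_3$, $3w_1+(2+\epsilon_2)w_2$ coincide, and finally uses that $\aa$ (resp.\ $\bb$) is a best response to $\bb'$ (resp.\ $\aa'$) in $(A,B)$ to see that the played coordinates within each block attain this common maximum. Your route instead works at the level of mixed payoffs: you extend the exact collapse identity of Lemma~\ref{lem.payoff} to a one-sided inequality $P(G;\xx'',\yy)\le P(D_{\epsilon_1,\epsilon_2};\cl(\xx''),\ww)$ valid for \emph{every} deviation $\xx''$, using the Nash conditions of $(A,B)$ to bound the cross terms, and then apply the Nash property of $(\vv,\ww)$ directly. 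A small bonus of your approach is that it never appeals to full support of the $3\times 3$ equilibrium; the bare Nash inequality for $(\vv,\ww)$ suffices. The paper's version is a little more hands-on but makes the block structure of $G\yy$ completely explicit. One tacit step you might make explicit for the reader: when you pass from ${\aa''}^TA\bb'\le v''_2\epsilon_1$ and ${\aa'}^TB\bb''\le v''_3\epsilon_2$ to the combined bound, you are also using $w_2,w_3\ge 0$.
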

\begin{proof} Let $\xx=(v_1,v_2*\aa,v_3*\bb)$ and $\yy=(w_1,w_2*\aa',w_3*\bb')$.
We will check the complementary conditions (\ref{eq.ne}) for $(\xx,\yy)$. Let $S_1$ be the support set of first player and $S_2$ be the support set of
second player w.r.t. profile $(\xx,\yy)$. In particular, we need to show: 
\[ \forall i \in S_1, (G\yy)_i=max_{1\leq i \leq m}(G\yy)_i,\ \ \ \ \ \  \forall i \in S_2, (\xx^T{G}^T)_i=max_{1\leq i \leq n}
(\xx^T{G}^T)_i.\]

We next show the first condition, and the proof for the second condition is similar. 
Let $\zz=G\yy$, then, $z_1=4w_2;\ \ \ 2\leq i \leq m+1,\ z_i=2w_1+4w_3+w_3(A\bb')_{i-1};\ \ \ m+2\leq j \leq n+m+1,\
z_j=3w_1+2w_2+w_2(B^T\aa')_{j-m-1}$.
The max among these three sets of strategies are $\max_1=4w_1$, $\max_2=2w_1+4w_3+w_3\max_{i \leq m}(A\bb')_i$ and
$\max_3=3w_1+2w_2+w_2max_{j \leq n}(\aa'^TB)_j$. If we show that $\max_1=\max_2=\max_3$ then lemma follows using the fact that
$\aa$ is a best response of the first-player against $\bb'$, $\bb$ is a best response of the second-player against $\aa'$ in game
$(A,B)$ (as $(\aa,\bb')$ and $(\aa',\bb)$ are its NE).

Since $(\vv,\ww)$ is NE for the game $(D_{\epsilon_1,\epsilon_2},{D_{\epsilon'_1,\epsilon'_2}}^T)$ where 
$\epsilon_1=\aa^TA\bb'$ and $\epsilon_2=\aa'^TB\bb$, and it is of full support (Lemma \ref{lem.EF}), we have
$4w_2=2w_1+4w_3+w_3\epsilon_1=3w_1+2w_2+w_2\epsilon_2$.
On the other hand, $(\aa,\bb')$ and $(\aa',\bb)$ being NE for the game $(A,B)$ implies
$\epsilon_1=\aa^TA\bb'=max_{1\leq i \leq m}(A\bb')_i$ and $ \epsilon_2=\aa'^TB\bb=max_{1\leq j \leq n}({\aa'}^TB)_j$.
Thus we get $\max_1=\max_2=\max_3$ as desired.
\end{proof}

The next theorem follows directly using Lemmas \ref{lem.forward1}, \ref{lem.forward2} and \ref{lem.main}.

\begin{theorem}
\label{thm.main} 
For $\xx=(c,\aa,\bb)$ and $\yy=(c',\aa',\bb')$, $(\xx,\yy)$ is a NE of game $(G,{G}^T)$ iff $(\cl(\xx),\cl(\yy))$ is a NE of 
$(D_{\epsilon_1,\epsilon_2},D_{\epsilon'_1,\epsilon'_2})$, where $\epsilon$s are defined as in (\ref{eq.e}),
and $(\n(\aa),\n(\bb'))$ and $(\n(\aa'),\n(\bb))$ are both NE of game
$(A,B)$. 
\end{theorem}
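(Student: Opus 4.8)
The plan is to assemble Theorem~\ref{thm.main} purely as a bookkeeping exercise on top of the four preceding lemmas, since the theorem asserts an ``iff'' whose two directions have already been carried out piecemeal. Concretely, I would split the statement into the forward implication ($(\xx,\yy)$ a NE of $(G,G^T)$ $\Rightarrow$ the other two claims) and the reverse implication.

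\textbf{Forward direction.} Suppose $(\xx,\yy)$ is a NE of $(G,G^T)$ with $\xx=(c,\aa,\bb)$, $\yy=(c',\aa',\bb')$. First invoke Corollary~\ref{corr} to conclude $\aa,\aa',\bb,\bb'$ are all nonzero, so that the normalized vectors $\n(\aa),\n(\bb),\n(\aa'),\n(\bb')$ are well defined and the $\epsilon$s of (\ref{eq.e}) make sense. Then Lemma~\ref{lem.forward1} gives immediately that $(\cl(\xx),\cl(\yy))$ is a NE of $(D_{\epsilon_1,\epsilon_2},D_{\epsilon'_1,\epsilon'_2}^T)$, and Lemma~\ref{lem.forward2} gives that $(\n(\aa),\n(\bb'))$ and $(\n(\aa'),\n(\bb))$ are both NE of $(A,B)$. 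That is the entire forward direction.

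\textbf{Reverse direction.} Suppose $(\cl(\xx),\cl(\yy))$ is a NE of $(D_{\epsilon_1,\epsilon_2},D_{\epsilon'_1,\epsilon'_2}^T)$ and $(\n(\aa),\n(\bb'))$, $(\n(\aa'),\n(\bb))$ are NE of $(A,B)$. Here the small gap is that Lemma~\ref{lem.main} is phrased in terms of \emph{given} NE $(\aa,\bb')$, $(\aa',\bb)$ of $(A,B)$ and a \emph{given} NE $(\vv,\ww)$ of the perturbed $3\times 3$ game, reconstructing a NE $((v_1,v_2*\aa,v_3*\bb),(w_1,w_2*\aa',w_3*\bb'))$ of $(G,G^T)$. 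So I would rename: set $\vv=\cl(\xx)$, $\ww=\cl(\yy)$, and replace $\aa,\bb$ by their normalizations $\n(\aa),\n(\bb)$ (and likewise the primed versions); the hypothesis that these are NE of $(A,B)$, together with the hypothesis on $(\vv,\ww)$, is exactly what Lemma~\ref{lem.main} needs, and it outputs a NE of $(G,G^T)$ whose collapse and whose sub-components match $(\xx,\yy)$ up to the normalization convention. One should note that by Lemma~\ref{lem.EF} the NE of $(D_{\epsilon_1,\epsilon_2},D_{\epsilon'_1,\epsilon'_2}^T)$ is unique and full-support, which pins $(\vv,\ww)$ down and guarantees $v_2,v_3,w_2,w_3>0$ so the reconstruction is consistent.

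\textbf{Main obstacle.} There is no real mathematical obstacle — the theorem is a corollary of Lemmas~\ref{lem.forward1}, \ref{lem.forward2} and \ref{lem.main}, as the paper states. The only care needed is the normalization bookkeeping: making sure that ``$\aa$'' in the hypothesis of Lemma~\ref{lem.main} is played consistently as either a raw sub-vector or a normalized strategy, and that the scalars $v_2,v_3$ (sums of components) reconcile the two. I would therefore spend the bulk of the write-up stating precisely how the vectors in the theorem statement map onto the vectors in each lemma, and then the proof is two lines per direction.
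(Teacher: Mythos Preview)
Your proposal is correct and matches the paper's approach exactly: the paper's entire proof of Theorem~\ref{thm.main} is the single sentence ``follows directly using Lemmas~\ref{lem.forward1}, \ref{lem.forward2} and \ref{lem.main},'' and you have correctly identified which lemma does which half and spelled out the normalization bookkeeping (using Corollary~\ref{corr} and Lemma~\ref{lem.EF} to ensure the normalized vectors exist and the reconstruction via $v_2\cdot\n(\aa)=\aa$ etc.\ is consistent). There is nothing to add.
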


To show NP-hardness of computing non-symmetric NE in symmetric games, we need to establish connection between non-symmetric NE of game
$(G,G^T)$ and a pair of distinct NE of game $(A,B)$. Theorem \ref{thm.main} almost does the job except that no such conditions are
imposed on the NE of $(G,G^T)$ and of $(A,B)$. Next theorem achieves exactly this.

\begin{theorem}\label{thm.npc}
The symmetric game $(G,{G}^T)$ has a non-symmetric NE iff the game $(A,B)$ has more than one NE.
\end{theorem}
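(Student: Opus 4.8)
The plan is to derive both directions from Theorem~\ref{thm.main}, Corollary~\ref{corr}, Lemma~\ref{lem.main} and Lemma~\ref{lem.EF}; the only genuine content beyond citing these is a short degeneracy argument in each direction.

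For the forward implication, suppose $(\xx,\yy)$ with $\xx=(c,\aa,\bb)$ and $\yy=(c',\aa',\bb')$ is a non-symmetric NE of $(G,G^T)$. By Corollary~\ref{corr} the vectors $\aa,\aa',\bb,\bb'$ are all nonzero, so the normalizations make sense, and Theorem~\ref{thm.main} gives that $(\n(\aa),\n(\bb'))$ and $(\n(\aa'),\n(\bb))$ are both NE of $(A,B)$. If these two equilibria are distinct, then $(A,B)$ has more than one NE and we are done. I would then rule out the remaining case $\n(\aa)=\n(\aa')$ and $\n(\bb)=\n(\bb')$ as follows. Under this equality, the defining formulas~(\ref{eq.e}) give $\epsilon_1=\n(\aa)^TA\n(\bb')=\n(\aa')^TA\n(\bb)=\epsilon'_1$ and likewise $\epsilon_2=\epsilon'_2$, so $(D_{\epsilon_1,\epsilon_2},D_{\epsilon'_1,\epsilon'_2}^T)$ is a symmetric game; by Lemma~\ref{lem.EF} its unique NE is symmetric, and by Theorem~\ref{thm.main} that NE is $(\cl(\xx),\cl(\yy))$. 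Hence $c=c'$, $\sum_i a_i=\sum_i a'_i$ and $\sum_j b_j=\sum_j b'_j$; combining these scalar identities with $\n(\aa)=\n(\aa')$ and $\n(\bb)=\n(\bb')$ yields $\aa=\aa'$ and $\bb=\bb'$, i.e.\ $\xx=\yy$, contradicting non-symmetry.

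For the converse, suppose $(A,B)$ has two distinct NE $(\aa_1,\bb_1)$ and $(\aa_2,\bb_2)$. I would invoke Lemma~\ref{lem.main} with $(\aa,\bb')=(\aa_1,\bb_1)$ and $(\aa',\bb)=(\aa_2,\bb_2)$. The resulting perturbation parameters of~(\ref{eq.e}) are each a convex combination of entries of $A$ or of $B$, hence lie in $[0,1)$ and are small enough for Lemma~\ref{lem.EF} to apply; that lemma then provides the unique, full-support NE $(\vv,\ww)$ of $(D_{\epsilon_1,\epsilon_2},D_{\epsilon'_1,\epsilon'_2}^T)$. By Lemma~\ref{lem.main}, $\xx=(v_1,v_2\aa_1,v_3\bb_2)$ and $\yy=(w_1,w_2\aa_2,w_3\bb_1)$ form a NE of $(G,G^T)$. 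To see it is non-symmetric: if $\xx=\yy$ then $v_2\aa_1=w_2\aa_2$, and summing coordinates gives $v_2=w_2$; since $v_2>0$ by full support, $\aa_1=\aa_2$, and symmetrically $v_3\bb_2=w_3\bb_1$ forces $\bb_1=\bb_2$, contradicting distinctness of the two equilibria. Thus $(G,G^T)$ has a non-symmetric NE.

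The step I expect to be the crux is the degenerate case of the forward direction: recognizing that $\n(\aa)=\n(\aa')$ together with $\n(\bb)=\n(\bb')$ collapses the perturbed $3\times3$ game to a symmetric one, so that the ``unique NE is symmetric'' conclusion of Lemma~\ref{lem.EF} can be transported back through $\cl(\cdot)$ and combined with the matching normalized vectors to conclude $\xx=\yy$. Everything else is routine bookkeeping with the normalization and collapse maps already set up in Section~\ref{sec.Hard}.
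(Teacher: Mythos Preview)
Your proposal is correct and follows essentially the same approach as the paper's own proof: both directions go through Theorem~\ref{thm.main} and Lemma~\ref{lem.EF}, with the forward degenerate case handled by observing that $\n(\aa)=\n(\aa')$ and $\n(\bb)=\n(\bb')$ force $\epsilon_i=\epsilon'_i$ and hence $\cl(\xx)=\cl(\yy)$, and the converse by building $(\xx,\yy)$ from the full-support NE $(\vv,\ww)$ via Lemma~\ref{lem.main}. Your write-up is in fact slightly more careful than the paper's in two places: you spell out why $\xx\neq\yy$ in the converse (the paper simply asserts ``$(\aa,\bb')\neq(\aa',\bb)$ implies $\xx\neq\yy$''), and you note explicitly that the $\epsilon$'s lie in the range required by Lemma~\ref{lem.EF}.
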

\begin{proof}
($\Rightarrow$)
Let $(\xx,\yy)$ be a non-symmetric NE of $(G,{G}^T)$, where $\xx=(c,\aa,\bb)$ and $\yy=(c',\aa',\bb')$.  Then, using Theorem
\ref{thm.main}, $(\n(\aa),\n(\bb'))$ and $(\n(\aa'),\n(\bb))$ are NE for the game $(A,B)$.  We need to show that these are distinct. To
the contrary suppose $\n(\aa)=\n(\aa')$ and $\n(\bb)=\n(\bb')$. Then, $\epsilon_1=\epsilon'_1$ and $\epsilon_2=\epsilon'_2$. Further,
$(\cl(\xx),\cl(\yy))$ being NE of $(D_{\epsilon_1,\epsilon_2},{D_{\epsilon'_1,\epsilon'_2}}^T)$ (Theorem \ref{thm.main}), we have
$\cl(\xx)=\cl(\yy)$ (Lemma \ref{lem.EF}). This together with $\n(\aa)=\n(\aa')$ and $\n(\bb)=\n(\bb')$ implies $c=c'$, $\aa=\aa'$ and
$\bb=\bb'$, a contradiction.

($\Leftarrow$)Let $(\aa,\bb')$ and $(\aa',\bb)$ be two different NE for the game $(A,B)$, and let $(\vv,\ww)$ be NE for the
game $(D_{\epsilon_1,\epsilon_2},{D_{\epsilon'_1,\epsilon'_2}}^T)$. Let $\xx=(v_1,v_2\aa,v_3\bb)$ and $\yy=(w_1,w_2\aa',w_3\bb')$. 
then using Theorem \ref{thm.main} $(\xx,\yy)$ is a NE of game $(G,G^T)$.
Further, $(\aa,\bb')\neq(\aa',\bb)$ implies $\xx\neq \yy$.
\end{proof}

Note that, size of $(G,G^T)$ is $O(size(A,B))$, and hence Theorem \ref{thm.npc} implies polynomial-time reduction from the 
problem of checking if a bimatrix game has more than one NE to checking if a symmetric two-player game has a non-symmetric NE. 
Since former is NP-complete \cite{GZ}, this shows NP-hardness for the latter.
Containment in NP follows since all NE of games $(G,G^T)$ are rational numbers of size polynomial in the
size (bit-length) of $G$ \cite{agt.ch2}. Thus we get the next theorem.

\begin{theorem}
Checking existence of a non-symmetric Nash equilibrium in a symmetric game is NP-complete.
\end{theorem}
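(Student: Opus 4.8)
The plan is to establish both directions of NP-completeness, with the bulk of the work already carried out in the lemmas above. For NP-hardness I would reduce from the problem of deciding whether a bimatrix game $(A,B)$ has more than one Nash equilibrium, which is NP-complete by \cite{GZ}. Given such an instance, first normalize $(A,B)$ so that every entry of $A$ and of $B$ is positive and $\ll 1$; since adding a constant to, or positively rescaling, each payoff matrix leaves the best-response structure and hence the NE set (and its cardinality) unchanged, this is a polynomial-time preprocessing step that also guarantees the $\epsilon$'s appearing in (\ref{eq.e}) satisfy the hypothesis $0\le \epsilon_i,\epsilon'_i\ll 1$ of Lemma \ref{lem.EF}. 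Then construct the $(1+m+n)\times(1+m+n)$ matrix $G=K+\left[\begin{array}{ccc}0&0&0\\0&0&A\\0&B^T&0\end{array}\right]$ defined above; this is computable in polynomial time and $\mathrm{size}(G,G^T)=O(\mathrm{size}(A,B))$. By Theorem \ref{thm.npc}, $(G,G^T)$ has a non-symmetric NE if and only if $(A,B)$ has at least two NE, which is exactly the reduction we want.

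For membership in \NP, I would use a candidate non-symmetric strategy profile $(\xx,\yy)$ of the given symmetric game $(C,C^T)$ as the certificate. The two facts to invoke are: (i) every NE of a bimatrix game is determined by its support pair as the solution of a linear system of equalities and inequalities whose coefficients come from the payoff matrix, so a NE can be taken to consist of rational vectors of bit-length polynomial in the bit-length of the payoff matrix \cite{agt.ch2}; and (ii) given such a profile, one checks in polynomial time the complementarity conditions (\ref{eq.ne}) — compute $C\yy$ and $\xx^TC$, their coordinate-wise maxima, and verify the complementary-slackness equalities — together with the inequality $\xx\ne\yy$. Hence a non-symmetric NE, when one exists, admits a short, polynomial-time verifiable certificate. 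Combining the two directions yields NP-completeness.

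The main obstacle is entirely in the hardness direction, and specifically in Theorem \ref{thm.npc}, which rests on the chain of lemmas proved above: Lemma \ref{lem.EF} (the perturbed $3\times 3$ game $(D_{\epsilon_1,\epsilon_2},D_{\epsilon'_1,\epsilon'_2}^T)$ has a \emph{unique}, full-support NE), Lemma \ref{lem.payoff} (payoffs in $(G,G^T)$ equal the corresponding payoffs in the perturbed $3\times 3$ game under the collapse map), Lemmas \ref{lem.forward1} and \ref{lem.forward2} together with Corollary \ref{corr} (a NE of $(G,G^T)$ collapses to a NE of the perturbed $3\times 3$ game, forcing all four blocks $\aa,\aa',\bb,\bb'$ to be nonzero, and the normalized blocks give two NE of $(A,B)$), Lemma \ref{lem.main} (the converse lift), and Theorem \ref{thm.main} (the resulting one-to-one correspondence). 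Given all of these, the theorem is immediate; the only fresh points to verify are the polynomial size of $G$, the polynomial-time computability of the reduction, the harmlessness of the normalization of $(A,B)$, and the standard rationality/size bound on NE needed for containment in \NP.
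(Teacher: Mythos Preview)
Your proposal is correct and follows essentially the same approach as the paper: reduce from Non-Unique NE via Theorem~\ref{thm.npc} (noting the polynomial size of $G$ and the harmless normalization of $(A,B)$), and establish membership in \NP\ via the standard rationality/polynomial-size bound on bimatrix NE together with polynomial-time verification of (\ref{eq.ne}) and $\xx\neq\yy$. You are slightly more explicit than the paper about the normalization step and the verification procedure, but the argument is the same.
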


The proof of Theorem \ref{thm.npc} does not indicate how the number of equilibria in the games relate to each other. We
explore this in the next theorem to show the $\#P$-completeness result.

\begin{theorem}\label{thm.SP}
There is a one-to-one correspondence between ordered pairs of NE of the game $(A,B)$ and non-symmetric $NE$ of the symmetric game
$(G,G^T)$.  
\end{theorem}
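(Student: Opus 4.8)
The plan is to write down the bijection explicitly and verify that it is well defined and invertible, recycling Theorem \ref{thm.main}, Theorem \ref{thm.npc}, Corollary \ref{corr}, and Lemma \ref{lem.EF}. For a non-symmetric NE $(\xx,\yy)$ of $(G,G^T)$ with $\xx=(c,\aa,\bb)$ and $\yy=(c',\aa',\bb')$, define
\[
\Phi(\xx,\yy)\defeq\bigl((\n(\aa),\n(\bb')),\,(\n(\aa'),\n(\bb))\bigr).
\]
By Corollary \ref{corr} the vectors $\aa,\bb,\aa',\bb'$ are nonzero, so the normalizations make sense; by Lemma \ref{lem.forward2} (equivalently Theorem \ref{thm.main}) both entries of $\Phi(\xx,\yy)$ are NE of $(A,B)$; and by the $(\Rightarrow)$ argument inside the proof of Theorem \ref{thm.npc}, non-symmetry of $(\xx,\yy)$ forces $(\n(\aa),\n(\bb'))\ne(\n(\aa'),\n(\bb))$. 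Thus $\Phi$ carries a non-symmetric NE of $(G,G^T)$ to an ordered pair of \emph{distinct} NE of $(A,B)$. (A symmetric NE of $(G,G^T)$ maps instead to a diagonal pair $(N,N)$; accordingly, the correspondence of the theorem should be read as being between \emph{non-symmetric} NE and ordered pairs of \emph{distinct} NE.)

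For the inverse, given an ordered pair of distinct NE $\bigl((\aa,\bb'),(\aa',\bb)\bigr)$ of $(A,B)$ (so $\aa,\aa'\in X$ and $\bb,\bb'\in Y$), compute $\epsilon_1,\epsilon_2,\epsilon'_1,\epsilon'_2$ by (\ref{eq.e}). By Lemma \ref{lem.EF} the game $(D_{\epsilon_1,\epsilon_2},D_{\epsilon'_1,\epsilon'_2}^T)$ has a \emph{unique} NE $(\vv,\ww)$, and it has full support, so $v_1,v_2,v_3,w_1,w_2,w_3>0$. Set
\[
\Psi\bigl((\aa,\bb'),(\aa',\bb)\bigr)\defeq\bigl((v_1,v_2\aa,v_3\bb),\,(w_1,w_2\aa',w_3\bb')\bigr).
\]
By Lemma \ref{lem.main} this is a NE of $(G,G^T)$, and by the $(\Leftarrow)$ argument inside the proof of Theorem \ref{thm.npc} it is non-symmetric: if $\aa\ne\aa'$ then $v_2\aa\ne w_2\aa'$ (both normalized, $v_2,w_2>0$), and similarly if $\bb\ne\bb'$, so in either case the two halves of the profile differ.

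It remains to check that $\Phi$ and $\Psi$ are mutually inverse. For $\Psi\circ\Phi$: the quantities $\epsilon_1,\dots,\epsilon'_2$ attached by (\ref{eq.e}) to the pair $\Phi(\xx,\yy)$ agree with those attached to $(\xx,\yy)$ itself, since $P(A;\aa,\bb')=\n(\aa)^TA\n(\bb')=P(A;\n(\aa),\n(\bb'))$ and likewise for the other three; hence, by Theorem \ref{thm.main}, $(\cl(\xx),\cl(\yy))$ is a NE of $(D_{\epsilon_1,\epsilon_2},D_{\epsilon'_1,\epsilon'_2}^T)$, and by the \emph{uniqueness} in Lemma \ref{lem.EF} it equals the $(\vv,\ww)$ used in $\Psi$. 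Since $v_1=c$, $v_2=\sum_i a_i$, $v_3=\sum_j b_j$ (and similarly for $w$), we get $v_2\n(\aa)=\aa$, $v_3\n(\bb)=\bb$, $w_2\n(\aa')=\aa'$, $w_3\n(\bb')=\bb'$, so $\Psi(\Phi(\xx,\yy))=(\xx,\yy)$. For $\Phi\circ\Psi$: applying $\Phi$ to $\Psi\bigl((\aa,\bb'),(\aa',\bb)\bigr)$ merely renormalizes $v_2\aa,v_3\bb,w_2\aa',w_3\bb'$, which returns $\aa,\bb,\aa',\bb'$ (already normalized) and hence the original ordered pair. This yields the one-to-one correspondence, and counting both sides shows that if $(A,B)$ has $N$ Nash equilibria then $(G,G^T)$ has exactly $N^2-N$ non-symmetric ones, which is what the subsequent \#P-completeness argument needs. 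The one place that is not pure bookkeeping is the identification of $(\vv,\ww)$ with $(\cl(\xx),\cl(\yy))$ in the $\Psi\circ\Phi$ step, where it is the \emph{uniqueness} half of Lemma \ref{lem.EF}, not just existence, that makes the argument go through; I expect that to be the only point requiring care.
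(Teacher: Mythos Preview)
Your proof is correct and follows essentially the same route as the paper. Both arguments build the same pair of maps---from a non-symmetric NE $(\xx,\yy)$ of $(G,G^T)$ to the pair $\bigl((\n(\aa),\n(\bb')),(\n(\aa'),\n(\bb))\bigr)$, and from an ordered pair of NE of $(A,B)$ to the profile $\bigl((v_1,v_2\aa,v_3\bb),(w_1,w_2\aa',w_3\bb')\bigr)$ via the unique NE $(\vv,\ww)$ of the associated $D$-game---and both rely on Lemma~\ref{lem.EF} (uniqueness and full support), Corollary~\ref{corr}, and Theorem~\ref{thm.main} at exactly the same points.

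The only presentational difference is that the paper verifies the correspondence by showing each map is injective (if two ordered pairs yield the same $(\xx,\yy)$ then they coincide, and conversely), whereas you verify directly that $\Phi$ and $\Psi$ are mutual inverses. Your version is slightly more explicit and arguably cleaner, and your parenthetical observation that the theorem should be read as a bijection with ordered pairs of \emph{distinct} NE (since diagonal pairs correspond to symmetric NE of $(G,G^T)$) is a useful clarification that the paper leaves implicit. Your identification of the uniqueness clause in Lemma~\ref{lem.EF} as the one non-bookkeeping step is exactly right: it is what pins down $(\vv,\ww)=(\cl(\xx),\cl(\yy))$ and makes $\Psi\circ\Phi$ the identity.
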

\begin{proof}
First we show that for every ordered pair of NE of the game $(A,B)$, $((\aa,\bb'),(\aa',\bb))$, there is a unique non-symmetric
NE for the game $(G,{G}^T)$. Let $(\vv,\ww)$ be NE for the game $(D_{\epsilon_1,\epsilon_2},{D_{\epsilon'_1,\epsilon'_2}}^T)$. Let
$\xx=(v_1,v_2\aa,v_3\bb)$ and $\yy=(w_1,w_2\aa',w_3\bb')$, then $(\xx,\yy)$ is a non-symmetric NE of $(G, G^T)$ (Theorem
\ref{thm.npc}).  Similarly if another NE pair $((\balpha,\bbeta'),(\balpha',\bbeta))$ also gives $(\xx,\yy)$, then using Lemma 
\ref{lem.forward2} $\balpha=\aa$, $\bbeta'=\bb'$, $\balpha'=\aa'$ and $\bbeta=\bb$. 

Next, for every non-symmetric NE $(\xx,\yy)$ of game $(G,{G}^T)$, we will construct a unique pair of distinct NE for 
game $(A,B)$. Let $\xx=(c,\aa,\bb)$ and $\yy=(c',\aa',\bb')$, then $(\n(\aa),\n(\bb'))$ and
$(\n(\aa'),\n(\bb))$ are two different NE for the game $(A,B)$ (Theorem \ref{thm.npc}). 
Suppose another symmetric NE $(\xx',\yy')$ of $(G,G^T)$, where $\xx'=(o,\pp,\qq)$ and $\yy'=(o',\pp',\qq')$, gives the same pair, i.e., 
$\n(\pp)=\n(\aa)$, $\n(\qq)=\n(\bb)$, $\n(\pp')=\n(\aa')$ and $\n(\qq')=\n(\bb')$.
Then, $\epsilon_i$ and $\epsilon'_i$, $i=1,2$ defined in (\ref{eq.e}) are the same for both $(\xx,\yy)$ and $(\xx',\yy')$. Therefore,
using Lemma \ref{lem.EF} and Theorem \ref{thm.main}, we have $(\cl(\xx),\cl(\yy))=(\cl(\xx'),\cl(\yy'))$, implying
$(\xx,\yy)=(\xx',\yy')$.
\end{proof}

The next theorem follows using the fact that counting the number of NE in a bimatrix game is $\#P$-hard \cite{CS}, and Theorem
\ref{thm.SP}. Here, containment in $\#P$ follows from the rationality of NE in bimatrix games.

\begin{theorem}
Counting the number non-symmetric equilibria in a symmetric game is $\#P$-complete.
\end{theorem}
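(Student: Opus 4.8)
The plan is to prove the two halves of $\#P$-completeness in turn, leaning almost entirely on the correspondence already established in Theorem~\ref{thm.SP}: first a polynomial-time Turing reduction from counting Nash equilibria of a bimatrix game --- which is $\#P$-hard by \cite{CS} --- to counting non-symmetric NE of a symmetric game, and then membership in $\#P$. I expect essentially no new difficulty here; the two points that need a word of care are that the reduction is not parsimonious (it produces the count $N(N-1)$ rather than $N$), so one must observe it is still a legitimate polynomial-time Turing reduction, and that ``counting'' must be read, as in \cite{CS}, over instances with finitely many equilibria so that the quantity counted is a genuine $\#P$ function.

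For $\#P$-hardness, take an instance $(A,B)$ of the ($\#P$-hard \cite{CS}) problem of counting NE of a bimatrix game, restricted as in \cite{CS} to instances with finitely many NE; by Nash's theorem this number $N$ satisfies $N\ge 1$. Normalize $(A,B)$ as in this section so that $A,B>0$ with all entries $\ll 1$; this is achieved by positive scalings and by adding constants to columns of $A$ and rows of $B$, operations that leave the NE set, and hence $N$, unchanged. Now build the symmetric game $(G,G^T)$, which costs time polynomial in the size of $(A,B)$ since $G$ is $(1+m+n)\times(1+m+n)$. By Theorem~\ref{thm.SP}, together with the $(\Rightarrow)$ direction of Theorem~\ref{thm.npc} (a non-symmetric NE of $(G,G^T)$ yields two \emph{distinct} NE of $(A,B)$), the non-symmetric NE of $(G,G^T)$ are in bijection with the ordered pairs of distinct NE of $(A,B)$; hence their number is $M=N(N-1)$. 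Since $t\mapsto t(t-1)$ is strictly increasing on the integers $\ge 1$, one recovers $N=\frac{1+\sqrt{1+4M}}{2}$ from $M$ in polynomial time. Thus a single query to a counting oracle for non-symmetric NE of $(G,G^T)$ produces $N$, giving a polynomial-time Turing reduction from counting NE of $(A,B)$; since polynomial-time Turing reductions compose, our problem is $\#P$-hard.

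For membership in $\#P$, I would use the standard guess-and-verify argument, whose only substantive input is the rationality and polynomial bit-length of equilibria. When a symmetric game has finitely many NE, no two of them share the same pair of support sets --- otherwise every convex combination would again be a NE, contradicting finiteness --- so each non-symmetric NE is the unique solution of the linear system picked out by its support pair, and by Cramer's rule its coordinates are rational with bit-length polynomial in the size of $G$. Hence a nondeterministic polynomial-time machine can guess a pair of support sets $(S_1,S_2)$, solve the induced linear system, verify the complementarity conditions~(\ref{eq.ne}), that the support of the resulting profile is exactly $(S_1,S_2)$, and that the profile is non-symmetric, and accept iff all checks pass; the number of accepting computations is exactly the number of non-symmetric NE. This places the problem in $\#P$, and with the hardness above the theorem follows. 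The genuinely hard part --- pinning down the exact one-to-one correspondence --- has already been carried out in Theorems~\ref{thm.main}--\ref{thm.SP}, so nothing of substance remains beyond these bookkeeping steps.
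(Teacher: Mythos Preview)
Your proof is correct and follows the paper's approach exactly: \#P-hardness via Theorem~\ref{thm.SP} together with the \#P-hardness of counting bimatrix NE from \cite{CS}, and membership via rationality of equilibria. You are also right to read the correspondence of Theorem~\ref{thm.SP} as one with ordered pairs of \emph{distinct} NE (the diagonal pair yields a symmetric NE by Lemma~\ref{lem.EF}), giving $M=N(N-1)$, and to observe that recovering $N$ from $M$ still furnishes a legitimate polynomial-time Turing reduction.

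One minor imprecision in your membership argument: the linear \emph{equalities} picked out by a support pair $(S_1,S_2)$ need not have a unique solution even when the game has finitely many NE (the system has $|S_1|+|S_2|+2$ equations in $|S_1|+|S_2|+2$ unknowns only when $|S_1|=|S_2|$, and may be under- or over-determined otherwise), so Cramer's rule does not directly apply. What your convexity argument actually establishes is that the set of NE with that exact support is a single point; being the unique point of a rational polyhedron, it is a vertex and hence rational of polynomial bit-length. The nondeterministic machine should therefore solve the associated LP feasibility problem rather than a square linear system. This is a routine fix and does not affect the argument.
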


\section{Efficient Algorithm for Symmetric Rank-$1$ Games}\label{sec.symm}
In this section we consider computing symmetric NE in symmetric constant rank games. Recall that rank
of a two player game $(A,B)$ is defined as $rank(A+B)$. 
Adsul et al.  \cite{AGMS} gave a polynomial time algorithm to compute a Nash equilibrium of a rank-$1$ bimatrix game. In
case of a symmetric game, the Nash equilibrium found by the algorithm need not be symmetric. In what follows, we design a
polynomial time algorithm to compute a symmetric Nash equilibrium in symmetric rank-$1$ games. Our algorithm is an extension
of the Adsul et al. approach.

Let $(A,A^T)$ be a symmetric game, where $A$ is an $n \times n$ square matrix.  As discussed in Section \ref{sec.prel} a mixed-strategy
$\xx \in X$ is a symmetric Nash equilibrium of game $(A,A^T)$ if and only if it satisfies (\ref{eq.sne}). Using this, the next lemma
follows.

\begin{lemma}\label{lem.scost}
If $\xx \in X$ satisfies first part of (\ref{eq.sne}) then $\xx^T A\xx-\pi\le 0$. Equality holds iff $\xx$ is a symmetric NE
of $(A,A^T)$. 
\end{lemma}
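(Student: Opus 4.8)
The plan is to unpack the ``first part of (\ref{eq.sne})''---namely the system of inequalities $(A\xx)_i \le \pi$ for all $i \in S$, holding for some scalar $\pi$---and combine it with the fact that $\xx \in X$ is a probability vector. First I would take the inequalities $(A\xx)_i \le \pi$ and form the convex combination of them with weights $x_i$: since $\xx \ge \zero$ and $\sum_i x_i = 1$, multiplying the $i$-th inequality by $x_i$ and summing yields $\sum_i x_i (A\xx)_i \le \pi \sum_i x_i = \pi$. The left-hand side is exactly $\xx^T A \xx$, so $\xx^T A \xx - \pi \le 0$, which is the first claim.

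For the equality characterization, the key observation is that equality in this aggregated inequality holds iff equality holds in each individual inequality $(A\xx)_i \le \pi$ that actually contributes to the sum, i.e., for every $i$ with $x_i > 0$. Concretely, $\pi - \xx^T A \xx = \sum_i x_i \big(\pi - (A\xx)_i\big)$ is a sum of nonnegative terms (each $x_i \ge 0$ and each $\pi - (A\xx)_i \ge 0$), so it is zero iff $x_i (\pi - (A\xx)_i) = 0$ for every $i$. Thus $\xx^T A \xx = \pi$ iff $x_i((A\xx)_i - \pi) = 0$ for all $i \in S$. But together with the hypothesis that $(A\xx)_i \le \pi$ for all $i$, this is precisely the full system (\ref{eq.sne}), which by the discussion preceding it characterizes $\xx$ as a symmetric NE of $(A, A^T)$ with payoff $\pi$ to both players. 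Hence equality holds iff $\xx$ is a symmetric NE, as claimed.

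There is no real obstacle here; the only point requiring a moment's care is making sure the logical direction is clean in the equality case---that equality in the scalar inequality is genuinely equivalent to the full complementarity system, not merely implied by it---which follows from the nonnegativity-of-summands argument above. One should also note that the scalar $\pi$ witnessing the first part of (\ref{eq.sne}) is the same $\pi$ throughout (it is quantified existentially in the hypothesis), so no ambiguity arises. This lemma will then serve as the workhorse for recasting the search for a symmetric NE as an optimization problem whose optimum value is known a priori to be zero, mirroring the role of the standard QP in the Adsul et al.\ framework.
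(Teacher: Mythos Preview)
Your proof is correct and is exactly the natural argument the paper has in mind. The paper does not spell out a proof for this lemma---it simply says the lemma ``follows'' from the characterization (\ref{eq.sne})---but your weighted-sum/nonnegative-summands argument is precisely how one unpacks that claim, and indeed it mirrors the explicit proof the paper later gives for the analogous Lemma~\ref{lem.sne1}.
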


Using Lemma \ref{lem.scost} we get the following quadratic program which exactly captures the symmetric Nash equilibria of
game $(A,A^T)$.

\[
\begin{array}{lll}
\max & : & \xx^T A \xx - \pi \\
s.t. & & A\xx \le \pi;\ \ \ 
\xx\ge\zero;\ \ \ \sum_{i \in S} x_i=1
\end{array}
\]

Note that, the objective value of the above program is at most zero, and exactly zero at the optimal (Lemma \ref{lem.scost}).
If the rank of game $(A,A^T)$ is one then $A+A^T = \cc \cdot \dd^T$, where $\cc, \dd \in \Real^n$. 
Note that $\cc \cdot \dd^T$ is a symmetric matrix, and therefore, we have $\cc\cdot \dd^T=\dd \cdot \cc^T$

We will represent matrix $A$ as sum of a skew-symmetric matrix and a rank-$1$ symmetric matrix.  Let $K$ be a matrix such
that $k_{ij}=a_{ij}-\frac{c_i d_j}{2}$. This implies $A=K+\frac{1}{2}\cc \cdot \dd^T$. Since, $\cc \cdot \dd^T=\dd \cdot
\cc^T$ and $A+A^T=\cc \cdot\dd^T$, we get $K+K^T=0$. Thus, $K$ is skew-symmetric, and therefore $\zz^T K \zz=0$
for any vector $\zz \in \Real^n$. Replacing $A=K+\frac{1}{2}\cc\cdot\dd^T$ in the above quadratic program we get,

\[
\begin{array}{lll}
\max & : & \frac{1}{2}(\xx^T \cc) (\dd^T\xx) - \pi \\
s.t. & & K\xx + \frac{\cc}{2} (\dd^T \xx)\le \pi;\ \ \ 
 \xx\ge\zero; \ \ \ \sum_i x_i=1
\end{array}
\]

The above formulation is a rank-$1$ quadratic program, which is NP-hard in general.  However, we will show that it can be solved in
polynomial time using the Nash equilibrium properties.  The feasible region of the above program is linear, while the cost function is
quadratic which introduces the difficulty. The idea is to construct an LP-type formulation, using the fact that the quadratic term is a
product of two linear terms, while maintaining the fact that optimal value of the new formulation is also zero and it is achieved only
when complementarity is satisfied.  Towards this, we first replace $\dd^T\xx$ by $\lambda$ in the objective function as well as in the
inequality. This gives the following optimization problem where $\xx$ is a variable vector, and $\pi$ and $\lambda$ are scalar
variables. 

\begin{equation}\label{eq.lpl}
\begin{array}{lll}
\max & : & \frac{1}{2}\lambda (\xx^T \cc) - \pi \\
s.t. & & K\xx + \frac{\cc}{2} \lambda \le \pi;\ \ \
 \xx\ge\zero; \ \ \ \sum_i x_i=1
\end{array}
\end{equation}

\begin{lemma}\label{lem.sne1}
Let $(\xx,\l, \pi)$ be a feasible point of (\ref{eq.lpl}), then $\frac{1}{2}\lambda (\xx^T \cc) - \pi \le 0$. Equality holds iff
$x_i(K\xx + \frac{\cc}{2} \lambda - \pi)_i=0,\ \forall i \in [n]$.
\end{lemma}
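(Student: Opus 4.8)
The plan is to mimic the argument of Lemma \ref{lem.scost}, but now exploiting skew-symmetry of $K$ in place of the original quadratic form. First I would fix a feasible point $(\xx,\l,\pi)$ of (\ref{eq.lpl}), so that $K\xx+\frac{\cc}{2}\l\le\pi\ones$ componentwise, $\xx\ge\zero$, and $\sum_i x_i=1$. The natural move is to take the inner product of the vector inequality $K\xx+\frac{\cc}{2}\l\le\pi\ones$ with the nonnegative vector $\xx$: since $\xx\ge\zero$, this preserves the inequality and gives
\[
\xx^T K\xx + \tfrac{1}{2}\l(\xx^T\cc) \le \pi(\xx^T\ones) = \pi,
\]
using $\sum_i x_i = 1$ in the last step.

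Now I would invoke the key structural fact, already established in the text: $K$ is skew-symmetric, hence $\xx^T K\xx = 0$. Substituting this into the displayed inequality collapses it to $\frac{1}{2}\l(\xx^T\cc) - \pi \le 0$, which is the claimed bound. For the equality characterization, note that the only slack introduced in the derivation was in the step $\xx^T(K\xx+\frac{\cc}{2}\l)\le \xx^T(\pi\ones)$, i.e. in summing the $n$ componentwise inequalities $x_i(K\xx+\frac{\cc}{2}\l)_i \le x_i\pi$. Each such inequality is an equality precisely when $x_i=0$ or $(K\xx+\frac{\cc}{2}\l)_i=\pi$, equivalently $x_i\big(K\xx+\frac{\cc}{2}\l-\pi\big)_i=0$. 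Since $\xx^TK\xx=0$ holds identically (not just at equality), the difference $\pi - \frac{1}{2}\l(\xx^T\cc)$ equals exactly $\sum_i x_i\big(\pi-(K\xx+\frac{\cc}{2}\l)_i\big)$, a sum of nonnegative terms; it vanishes iff every term vanishes, giving the stated iff.

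I do not anticipate a genuine obstacle here — this lemma is the rank-$1$-symmetric analogue of the standard fact that the quadratic program's objective is nonpositive with equality at complementarity, and the skew-symmetry identity $\zz^TK\zz=0$ (already proved when $K$ was introduced) does all the work that $\xx^TA\yy$ arguments do in the asymmetric setting. The one point to state carefully is that the equality condition should be phrased as "$x_i(K\xx+\frac{\cc}{2}\l-\pi)_i=0$ for all $i$", matching the complementarity form that will later be used to recover symmetric NE of the rank-$2$ game $(K+\frac12\cc\uu^T,\,(K+\frac12\uu\cc^T)^T)$; I would make explicit that this is exactly the first-part-of-(\ref{eq.sne}) complementarity after the substitution $\l=\dd^T\xx$, so that the lemma dovetails with the QP reformulation preceding it.
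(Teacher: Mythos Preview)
Your proposal is correct and follows essentially the same argument as the paper: multiply the feasibility inequalities through by the nonnegative $x_i$, sum, use $\sum_i x_i=1$ and the skew-symmetry identity $\xx^TK\xx=0$, and then observe that equality in the resulting inequality holds iff each nonnegative summand $x_i(\pi-(K\xx+\tfrac{\cc}{2}\l)_i)$ vanishes. The paper's proof is exactly this, just stated more tersely.
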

\begin{proof}
Since $(\xx,\l,\pi)$ satisfies $(K\xx +\frac{\cc}{2} \lambda)_i \le \pi$ and $x_i \ge 0$, $\forall i \in [n]$, we have $x_i(K\xx
+\frac{\cc}{2} \lambda - \pi)_i \le 0$. Summing up over all $i$, and using $\sum_i x_i=1$ and $\xx^TK\xx=0$, we get $\frac{1}{2}\lambda
(\xx^T \cc) - \pi \le 0$. Further, if $x_i(K\xx
+\frac{\cc}{2} \lambda - \pi)_i = 0, \forall i \in [n]$ then $\frac{1}{2}\lambda
(\xx^T \cc) - \pi = 0$, and vice-versa.
\end{proof}

Note that, formulation (\ref{eq.lpl}) is independent of vector $\dd$, an essential for our original game. This seems 
very counter intuitive at first. However, this very property allows (\ref{eq.lpl}) to capture NE of a space of games, as established
next. Finally, we will use this rich structure to formulate one-dimensional fixed point to solve our game.
Next lemma shows that the solution set (\ref{eq.lpl}) is rich enough to contain a point for every value of $\lambda$.

\begin{lemma}\label{lem.sne2}
Given $\l \in \Real$, $\exists (\xx,\pi) \in \Real^{n+1}$ such that $(\xx,\l,\pi)$ is a solution of (\ref{eq.lpl}), and the objective
value is zero at $(\xx,\l,\pi)$.
\end{lemma}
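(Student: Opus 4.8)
The plan is to reduce the statement to Nash's existence theorem for symmetric equilibria, after an "absorption" trick that hides the vector $\frac{\cc}{2}\lambda$ inside a payoff matrix. Fix $\lambda\in\Real$ and set $\ww\defeq\frac{\lambda}{2}\cc\in\Real^n$. The key observation is that, since every $\xx\in X$ satisfies $\one^T\xx=1$, the affine term $\frac{\cc}{2}\lambda$ appearing in the constraint of (\ref{eq.lpl}) can be realized as a linear map: define the $n\times n$ matrix $M\defeq K+\ww\one^T$. Then for all $\xx\in X$ we have $M\xx = K\xx + \ww(\one^T\xx) = K\xx + \frac{\cc}{2}\lambda$, so the vector $M\xx$ is exactly the left-hand side of the inequality constraint of (\ref{eq.lpl}).

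Next I would consider the bimatrix game $(M,M^T)$, which is symmetric in the required sense $B=A^T$ (even though $M$ itself need not be a symmetric matrix). By Nash's theorem \cite{nash}, this game admits a symmetric Nash equilibrium $\xx\in X$. Applying the characterization (\ref{eq.sne}) to $(M,M^T)$, there is a scalar $\pi$ with $(M\xx)_i\le\pi$ and $x_i((M\xx)_i-\pi)=0$ for all $i\in[n]$. Substituting $M\xx=K\xx+\frac{\cc}{2}\lambda$, these become $\bigl(K\xx+\frac{\cc}{2}\lambda\bigr)_i\le\pi$ and $x_i\bigl(K\xx+\frac{\cc}{2}\lambda-\pi\bigr)_i=0$ for all $i$. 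The first family of inequalities, together with $\xx\ge\zero$ and $\sum_i x_i=1$, shows that $(\xx,\lambda,\pi)$ is feasible for (\ref{eq.lpl}); the second family is precisely the complementarity condition of Lemma \ref{lem.sne1}, so the objective value $\frac12\lambda(\xx^T\cc)-\pi$ equals $0$ at $(\xx,\lambda,\pi)$. Since Lemma \ref{lem.sne1} guarantees this objective is $\le 0$ at every feasible point (for any $\lambda$, in particular the one fixed here), the value $0$ is optimal, so $(\xx,\lambda,\pi)$ is an optimal solution of (\ref{eq.lpl}) with objective value zero, as required.

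I do not expect a genuine obstacle here: the only non-routine ingredient is spotting the matrix $M=K+\ww\one^T$, after which the argument is a direct appeal to Nash's symmetric-equilibrium theorem plus Lemma \ref{lem.sne1}. If one wished to avoid invoking the symmetric version of Nash's theorem, the same $\xx$ could be produced by a Kakutani fixed-point argument applied to the best-response correspondence $\xx\mapsto\{\xx'\in X:\ \mathrm{supp}(\xx')\subseteq\argmax_k(M\xx)_k\}$ on the simplex $X$, but quoting Nash is cleaner and matches the style of the rest of the paper.
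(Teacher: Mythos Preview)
Your proposal is correct and is essentially the paper's own argument: the matrix $M=K+\tfrac{\lambda}{2}\cc\,\one^T$ you introduce coincides with the paper's $Z=K+\tfrac{1}{2}\cc\,\vv^T$ for $\vv=\lambda\ones$, and both proofs then invoke Nash's symmetric-equilibrium theorem on this auxiliary game and conclude via Lemma~\ref{lem.sne1} in the same way.
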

\begin{proof}
Let $\vv = \l*\ones$, where $\ones$ is an $n$-dimensional vector with all $1$s. Now, consider a symmetric game with matrix
$Z=K+\frac{1}{2}\cc \vv^T$. Let $\zz$ be a symmetric Nash equilibrium of this game and $\pi$ be the corresponding payoff $\zz^TZ\zz$,
then $\vv^T \zz=\l$. Therefore, using (\ref{eq.sne}) we have 
\[
\forall i \in [n]:\ \ \ (K\zz + \frac{\cc}{2} \l)_i \le \pi;\ \ \ \zz_i((K\zz + \frac{\cc}{2} \l)_i-\pi)=0
\]
The first inequality implies that $(\zz,\l,\pi)$ is feasible in (\ref{eq.lpl}), and the second implies that objective value is zero at
it.  Therefore, it has to be optimal due to Lemma \ref{lem.sne1}.
\end{proof}

Lemmas \ref{lem.sne1} and \ref{lem.sne2} imply that the optimal value of (\ref{eq.lpl}) is zero, and for every $a \in \Real$ there is
an optimal solution with $\l=a$. If we substitute some value for $\l$ in (\ref{eq.lpl}), then it becomes an LP. Therefore, consider it
as a parameterized linear program $LP(\l)$. The optimal value of $LP(\l)$ for any $\l \in \Real$ is zero (due to Lemma \ref{lem.sne2}).
Therefore, solutions of (\ref{eq.lpl}) are exactly the solutions of $LP(\l),\ \forall \l \in \Real$.

\begin{remark}
Let $S$ be the set of optimal solutions of (\ref{eq.lpl}).
Since the optimal value of (\ref{eq.lpl}) is zero, it is easy to see that $(\xx,\l,\pi) \in S$ if and only if $\forall i
\in [n],\ x_i(K\xx-\frac{\cc}{2}\l-\pi)_i=0$, which requires $n$ equalities. Let us assume that the polyhedron of (\ref{eq.lpl}) 
in $(\xx,\pi,\lambda)$-space is non-degenerate. Due to the equality $\sum_i x_i=1$, the polyhedron is in $(n+1)$-dimension.
Now, since at least $n$ equalities are tight on every point of $S$, we get $S \subset 1$-skeleton of the polyhedron. Further, we can
show that every vertex of $S$ has degree two in $S$, and thus $S$ forms cycles and paths (with unbounded edges on both ends). In
addition, for any $a \in \Real$, the set of points of $S$ with $\l=a$ are exactly the solutions of $LP(a)$ which must be a closed and
convex set. Using all these properties it is easy to see that $S$ forms a single path, on which $\l$ changes monotonically.
In this paper, we bypass the polyhedron completely and give a much simpler approach in the next lemma.
\end{remark}

The result of next lemma is central to the construction of one-dimensional fixed point formulation for solving our original game
$(A,A^T)$.

\begin{lemma}\label{lem.sne3}
Given a $\l \in \Real$, if $(\xx,\pi)$ is a solution of the $LP(\l)$ then for any $\vv \in \Real^n$ satisfying $\vv^T\xx = \l$, $\xx$
is a symmetric NE of game $(Z,Z^T)$, where $Z=K+\frac{1}{2}\cc\vv^T$.
\end{lemma}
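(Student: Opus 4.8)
The plan is to verify the symmetric Nash conditions~(\ref{eq.sne}) directly for $\xx$ in the game $(Z,Z^T)$, using the characterization of optimal solutions of $LP(\l)$ from Lemma~\ref{lem.sne1}. First I would note that since $(\xx,\pi)$ is a solution of $LP(\l)$, the point $(\xx,\l,\pi)$ is feasible and, by Lemma~\ref{lem.sne2}, attains objective value zero; hence by Lemma~\ref{lem.sne1} we get the complementarity $x_i\bigl((K\xx+\tfrac{\cc}{2}\l-\pi)_i\bigr)=0$ for all $i\in[n]$, together with the feasibility inequality $(K\xx+\tfrac{\cc}{2}\l)_i\le \pi$ for all $i$.

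The key step is then simply to substitute $\l = \vv^T\xx$. Since $Z=K+\tfrac12\cc\vv^T$, we have $(Z\xx)_i = (K\xx)_i + \tfrac12 c_i(\vv^T\xx) = (K\xx)_i + \tfrac12 c_i\l = (K\xx+\tfrac{\cc}{2}\l)_i$. Therefore the feasibility inequality reads $(Z\xx)_i\le\pi$ for all $i$, and the complementarity reads $x_i\bigl((Z\xx)_i-\pi\bigr)=0$ for all $i$. It remains only to check that $\pi$ is indeed the payoff $\xx^TZ\xx$: summing the complementarity conditions over $i$ and using $\sum_i x_i=1$ gives $\sum_i x_i (Z\xx)_i = \pi$, i.e. $\xx^TZ\xx=\pi$. (Equivalently, $\xx^TZ\xx = \xx^TK\xx + \tfrac12(\xx^T\cc)(\vv^T\xx) = \tfrac12\l(\xx^T\cc)$, which equals $\pi$ precisely because the objective value is zero; but the first derivation is cleaner.) These are exactly conditions~(\ref{eq.sne}) for $\xx$ to be a symmetric NE of $(Z,Z^T)$ with payoff $\pi$.

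I do not expect a real obstacle here; this lemma is essentially an unpacking of the definitions and of Lemma~\ref{lem.sne1}, and the only thing to be careful about is the bookkeeping that the scalar $\pi$ from the LP coincides with the equilibrium payoff $\xx^TZ\xx$, which follows immediately from summing the complementarity slackness conditions against the weights $x_i$. The genuine work of the section lies in the subsequent step of setting up and solving the one-dimensional fixed-point problem $\l\mapsto\vv^T\xx(\l)$ with $\vv$ chosen as a function of $\l$ to recover the original vector $\dd$; that, rather than this lemma, is where the new ideas are needed.
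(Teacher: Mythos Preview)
Your proposal is correct and mirrors the paper's own proof: both use Lemma~\ref{lem.sne2} to conclude that the optimal value of $LP(\l)$ is zero, invoke Lemma~\ref{lem.sne1} to obtain the complementarity conditions, and then substitute $\l=\vv^T\xx$ to recognize these as exactly~(\ref{eq.sne}) for the game $(Z,Z^T)$. Your extra line verifying $\pi=\xx^T Z\xx$ is a nice bit of bookkeeping the paper leaves implicit, but otherwise the arguments are the same.
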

\begin{proof}
Let $(\xx,\pi)$ be a solution of $LP(\l)$, then since feasible region of $LP(\l)$ is a subset of the feasible region of (\ref{eq.lpl}),
vector $(\xx,\l,\pi)$ satisfies $(K\xx-\frac{\cc}{2}\l)_i \le \pi;\ \xx\ge \zeros; \sum_i x_i=1$. This ensures that $\xx$ is a probability
distribution vector.
Due to Lemma \ref{lem.sne2}, it also satisfies
$x_i(K\xx+\frac{\cc}{2}{\l}-\pi)_i=0, \ \forall i  \in [n]$. Setting, $\l=\vv^T\xx$, these conditions are exactly that of (\ref{eq.sne})
for strategy $\xx$ and game $(Z,Z^T)$ where $Z=K+\frac{1}{2}\cc\vv^T$.
\end{proof}

\begin{remark}
Note that both the matrices of the games constructed in Lemma \ref{lem.sne3} change with $\vv$, and $\cc \vv^T$ need not be a
symmetric matrix. Therefore, rank$(Z + Z^T) = 2$.
In the Adsul et. al. approach, the first matrix is same
in all the games, and the solutions of $LP(\l)$ are NE of a family of rank-$1$ games, which crucially uses the fact that $\yy$ need not
be same as $\xx$ (non-symmetric). For this reason, their approach is not immediately applicable for finding symmetric NE.
\end{remark}

Lemma \ref{lem.sne3} implies that if we can find a $\l$ such that the solution $(\xx,\pi)$ of $LP(\l)$ satisfies $\dd^T \xx = \l$, then
$\xx$ is a symmetric Nash equilibrium of our original rank-$1$ game $(A,A^T)$. Using this observation, consider a $1$-dimensional
correspondence $F:[d_{min},\ d_{max}]\rightarrow 2^{[d_{min},\ d_{max}]}$, where $d_{min}=\min_{i \in [n]} d_i$ and $d_{max}=\max_{i\in
[n]} d_i$.

\[
F(\l)=\{\dd^T \xx \ |\ \xx \mbox{ is a solution of } LP(\l)\}
\]

By definition we have that $\forall \l \in [d_{min},\ d_{max}]$, $F(\l)$ is non-empty (Lemma \ref{lem.sne2}) and convex. 
Now using the Kakutani fixed-point theorem, $F$ has fixed-points. Clearly, every fixed-point of $F$ gives a Nash equilibrium
by Lemma \ref{lem.sne3}, and the next theorem follows.

\begin{theorem}\label{thm.sne1}
The fixed points of $F$ exactly capture the Nash equilibria of the game $(A,A^T)$.
\end{theorem}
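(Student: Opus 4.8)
My plan is to prove the two halves of the equivalence separately, feeding off Lemmas~\ref{lem.sne1}--\ref{lem.sne3} and the symmetric-NE characterization (\ref{eq.sne}). The identity I would lean on throughout is $A = K + \frac{1}{2}\cc\dd^T$ with $K$ skew-symmetric, which gives $A\xx = K\xx + \frac{\cc}{2}(\dd^T\xx)$ and $\xx^TA\xx = \frac{1}{2}(\xx^T\cc)(\dd^T\xx)$ for every $\xx$; hence substituting $\lambda = \dd^T\xx$ into $LP(\lambda)$ recreates exactly the game $(A, A^T)$.

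For the direction ``a fixed point of $F$ gives a symmetric NE'', I would take $\lambda^*$ with $\lambda^* \in F(\lambda^*)$, pull out a solution $\xx$ of $LP(\lambda^*)$ with $\dd^T\xx = \lambda^*$ (which exists by the definition of $F$), and apply Lemma~\ref{lem.sne3} with $\vv := \dd$. Since $\dd^T\xx = \lambda^*$, the lemma hands back that $\xx$ is a symmetric NE of $(Z, Z^T)$ with $Z = K + \frac{1}{2}\cc\dd^T = A$, i.e., of $(A, A^T)$. This is essentially the observation already recorded in the remark before the theorem, so I would keep it brief.

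For the converse, I would start from a symmetric NE $\xx$ of $(A, A^T)$ with payoff $\pi = \xx^TA\xx$, so that (\ref{eq.sne}) holds, and set $\lambda := \dd^T\xx$. First I would note $\lambda \in [d_{min}, d_{max}]$, since $\xx \in X$ makes $\lambda$ a convex combination of the $d_i$, so $F(\lambda)$ is defined. Rewriting via $A\xx = K\xx + \frac{\cc}{2}\lambda$, the conditions $(A\xx)_i \le \pi$, $\xx \ge \zeros$, $\sum_i x_i = 1$ say precisely that $(\xx, \lambda, \pi)$ is feasible in (\ref{eq.lpl}), and the complementarity part of (\ref{eq.sne}) becomes $x_i(K\xx + \frac{\cc}{2}\lambda - \pi)_i = 0$ for all $i$. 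Lemma~\ref{lem.sne1} then gives objective value $0$ at this point, and Lemma~\ref{lem.sne2} says $0$ is the optimum of $LP(\lambda)$; so $(\xx, \pi)$ is a solution of $LP(\lambda)$ and therefore $\lambda = \dd^T\xx \in F(\lambda)$, making $\lambda$ a fixed point that produces $\xx$. Putting the two directions together yields the theorem.

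I expect no deep obstacle here: the computations are straightforward substitutions of $\lambda = \dd^T\xx$. The only places I would be careful are (i) stating the correspondence precisely --- a solution $\xx$ of $LP(\lambda)$ for a fixed point $\lambda$ is an NE of $(A, A^T)$ only when it is compatible in the sense $\dd^T\xx = \lambda$, which is exactly what the fixed-point property enforces --- and (ii) verifying $\lambda = \dd^T\xx$ always lands in $[d_{min}, d_{max}]$ so that $F$ is a genuine self-correspondence, which is immediate. All of the real structural work (the rank-$2$ detour, the LP replacing $\dd^T\xx$ by a parameter) has already been done in the earlier lemmas, so this theorem should be a short wrap-up.
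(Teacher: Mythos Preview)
Your proposal is correct and follows the same line as the paper: the paper's own justification is essentially a one-liner invoking Lemma~\ref{lem.sne3} for the forward direction (``every fixed-point of $F$ gives a Nash equilibrium by Lemma~\ref{lem.sne3}'') and then asserts the theorem follows. Your argument spells out both directions carefully, in particular the converse (NE $\Rightarrow$ fixed point), which the paper leaves entirely implicit; your use of Lemmas~\ref{lem.sne1} and~\ref{lem.sne2} to certify optimality in $LP(\lambda)$ and your check that $\lambda = \dd^T\xx \in [d_{min}, d_{max}]$ are exactly the right details to add.
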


Since the Nash equilibrium profiles of game $(A,A^T)$ are rational vectors of size polynomial in the size of $A$ \cite{agt.bimatrix},
the fixed-points of $F$ are also rational numbers of polynomial sized (using Theorem \ref{thm.sne1}). 
Thus, one can compute an exact fixed point of $F$ in polynomial time using a simple binary search starting with the pivots $d_{min}$
and $d_{max}$, and the next theorem follows.

\begin{theorem}
The problem of computing a symmetric Nash equilibrium in a symmetric rank-$1$ game is in P.
\end{theorem}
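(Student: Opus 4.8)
\quad The plan is to reduce the problem to locating a fixed point of the correspondence $F$ by an exact binary search on $[d_{min},d_{max}]$, and then to read the desired symmetric equilibrium off that fixed point via Lemma~\ref{lem.sne3}. Three facts drive the argument. First, by Lemmas~\ref{lem.sne1} and~\ref{lem.sne2} the optimal value of $LP(\l)$ is $0$ for every $\l$, so once $\l$ is fixed, ``$(\xx,\pi)$ is an optimal solution of $LP(\l)$'' becomes a \emph{linear} condition, namely feasibility of (\ref{eq.lpl}) together with the equation $\tfrac12\l(\xx^T\cc)-\pi=0$; hence $LP(\l)$ and the related linear programs below can be solved exactly in polynomial time. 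Second, since every feasible $\xx$ is a probability vector, $\dd^T\xx$ is a convex combination of the $d_i$'s, so $F(\l)\subseteq[d_{min},d_{max}]$ for \emph{all} $\l$. Third, by Theorem~\ref{thm.sne1} a fixed point of $F$ exists, and since fixed points of $F$ are exactly the values $\dd^T\xx$ with $\xx$ a symmetric Nash equilibrium of $(A,A^T)$, and such equilibria are rationals of bit-length polynomial in the size of $A$~\cite{agt.bimatrix}, every fixed point of $F$ is a rational whose denominator is at most an explicitly computable bound $Q$ of polynomial bit-length.

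First I would run a bisection that maintains an interval $[\l_{lo},\l_{hi}]\subseteq[d_{min},d_{max}]$ together with witnesses: an optimal solution $\xx_{lo}$ of $LP(\l_{lo})$ with $\dd^T\xx_{lo}\ge\l_{lo}$, and an optimal solution $\xx_{hi}$ of $LP(\l_{hi})$ with $\dd^T\xx_{hi}\le\l_{hi}$. The start $\l_{lo}=d_{min},\ \l_{hi}=d_{max}$ admits such witnesses by the second fact above. At the midpoint $\l_m$ I solve $LP(\l_m)$, take any optimal $\xx_m$, and compare $\dd^T\xx_m$ with $\l_m$: if they are equal, stop; if $\dd^T\xx_m>\l_m$, replace $\l_{lo}$ by $\l_m$ with witness $\xx_m$; if $\dd^T\xx_m<\l_m$, replace $\l_{hi}$ by $\l_m$ with witness $\xx_m$. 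Each round is one linear program, and the interval halves.

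The step that needs care — and which I expect to be the main obstacle — is showing that this invariant genuinely brackets a fixed point of $F$, i.e. that some $\l\in[\l_{lo},\l_{hi}]$ satisfies $\l\in F(\l)$. Here I would invoke the structure of the solution set $S$ of (\ref{eq.lpl}) recalled in the Remark following Lemma~\ref{lem.sne3}: $S$ is a single path along which $\l$ varies monotonically, so the preimage of $[\l_{lo},\l_{hi}]$ under that monotone coordinate is a connected sub-arc of $S$, and its image $\{(\l,\dd^T\xx):(\xx,\l,\pi)\in S,\ \l\in[\l_{lo},\l_{hi}]\}$ is a connected curve containing a point with $\l-\dd^T\xx\le0$ (coming from $\xx_{lo}$) and a point with $\l-\dd^T\xx\ge0$ (coming from $\xx_{hi}$). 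Since $\l-\dd^T\xx$ is continuous along this arc, it vanishes somewhere, and that point is the desired fixed point of $F$ in $[\l_{lo},\l_{hi}]$. (If one prefers to avoid invoking $S$ entirely, it suffices instead to prove directly that $\l\mapsto\min F(\l)$ and $\l\mapsto\max F(\l)$ are non-decreasing, which again licenses the sign-based case analysis; making one of these two variants fully rigorous is the crux.)

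Finally I would make termination exact. After $O(\log(Q^2(d_{max}-d_{min})))$ bisection rounds, which is polynomial in the size of $A$, the bracket $[\l_{lo},\l_{hi}]$ has width below $Q^{-2}$; since two distinct rationals with denominators at most $Q$ differ by at least $Q^{-2}$, the bracket contains at most one such rational, and — it contains a fixed point by the previous paragraph — it contains exactly one, the fixed point $\l^{*}$, which I recover as the rational of least denominator in the bracket by a standard continued-fraction computation. Given $\l^{*}$, one more linear program — maximize $0$ subject to feasibility of (\ref{eq.lpl}) at $\l=\l^{*}$, the equation $\tfrac12\l^{*}(\xx^{T}\cc)=\pi$, and $\dd^{T}\xx=\l^{*}$ — is feasible precisely because $\l^{*}\in F(\l^{*})$, and any of its feasible points gives an optimal solution $(\xx,\pi)$ of $LP(\l^{*})$ with $\dd^{T}\xx=\l^{*}$; by Lemma~\ref{lem.sne3} applied with $\vv=\dd$ (so that $Z=K+\tfrac12\cc\dd^{T}=A$), this $\xx$ is a symmetric Nash equilibrium of $(A,A^{T})$. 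Every step runs in time polynomial in the size of $A$, which proves the theorem.
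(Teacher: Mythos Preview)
Your approach is essentially the paper's: both reduce to the one–dimensional fixed-point correspondence $F$ via Lemma~\ref{lem.sne3} and Theorem~\ref{thm.sne1}, observe that fixed points are rationals of polynomial size, and run a binary search on $[d_{min},d_{max}]$. The paper's argument is in fact a single sentence (``one can compute an exact fixed point of $F$ in polynomial time using a simple binary search''), so your write-up is a strict expansion of it.

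The one substantive difference is that you actually confront the question the paper leaves implicit: why does the bisection invariant ``$\dd^T\xx_{lo}\ge\l_{lo}$ and $\dd^T\xx_{hi}\le\l_{hi}$'' guarantee a fixed point in $[\l_{lo},\l_{hi}]$? Your resolution---traversing the connected path $S$ and applying the intermediate value theorem to $\l-\dd^T\xx$---is correct, but note the mild irony that this invokes exactly the polyhedral one-skeleton structure the paper's Remark announces it will ``bypass completely.'' Your alternative suggestion (establish monotonicity of $\min F$ or $\max F$ directly) would be closer in spirit to what the paper advertises, though neither the paper nor you carry that out. Also be aware that the path description in the Remark assumes non-degeneracy of the polyhedron; a fully rigorous version needs either a perturbation argument or the direct monotonicity route.
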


\section{Discussion}
\label{sec.discussion}

As stated in the Introduction, the complexity of finding a symmetric NE in a symmetric game of rank 2, 3, 4, or 5 remain open.
It is easy to show that for any $k \geq 1$, the rank $k$ problem reduces to the $k$-dimensional fixed point problem, which is PPAD-hard for $k \geq 2$.
Clearly, one way of resolving these questions is to find a reduction in the reverse direction. The structure of these problems seem to indicate
that such reductions should exist and all these problems should be hard. 

\bibliographystyle{abbrv}
\bibliography{kelly,sigproc}
\appendix
\section*{Appendix}

\section{Imitation Games}\label{sec.ig}
Given an $n\times n$ matrix $A$, consider a bimatrix game $(A, I)$, where $I$ is the $n \times n$ identity matrix is called
an {\em imitation matrix}.  The name arises from the fact that when restricted to pure strategies, the second player gets a payoff of
1 if she plays the same strategy as the first player and 0 otherwise. Imitation games lead to simple reductions between games. We
illustrate this in Lemma \ref{lem1} which leads to a reduction from NE in a bimatrix game to a symmetric NE in a symmetric bimatrix
game and is well known.

\begin{lemma}
\label{lem1}
 \cite{MT}
Let $(\xx , \yy)$ be a Nash equilibrium for bimatrix game $(A, I)$ where $A$ is an $n \times n$ matrix of positive entries
and $I$ is the $n \times n$ identity matrix. Then $(\yy, \yy)$ is a symmetric Nash equilibrium for $(A, A^T)$.
\end{lemma}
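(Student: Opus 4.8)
The plan is to read off the Nash equilibrium conditions for $(A,I)$ via (\ref{eq.ne}), observe that the imitation payoff matrix $I$ forces the second player's equilibrium support to lie inside the first player's support, and then verify that this makes $(\yy,\yy)$ satisfy the symmetric-equilibrium characterization (\ref{eq.sne}) for $(A,A^T)$.

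Concretely, first I would set $S_1=\mathrm{supp}(\xx)$ and $S_2=\mathrm{supp}(\yy)$. The first player's best-response condition in (\ref{eq.ne}) gives $S_1\subseteq\argmax_i(A\yy)_i$. For the second player, the payoff from pure strategy $j$ against $\xx$ is $(\xx^TI)_j=x_j$, so her best-response condition gives $S_2\subseteq\argmax_j x_j$. The one nontrivial step — though it is a one-liner — is to note $\argmax_j x_j\subseteq S_1$: since $\xx\in X$ is a probability vector, $\max_j x_j\ge 1/n>0$, so every coordinate attaining the maximum is strictly positive and hence in the support of $\xx$.

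Chaining these three inclusions yields $S_2\subseteq\argmax_j x_j\subseteq S_1\subseteq\argmax_i(A\yy)_i$. Setting $\pi\defeq\max_i(A\yy)_i$, it follows that $(A\yy)_i\le\pi$ for every $i$, with equality whenever $y_i>0$; this is exactly condition (\ref{eq.sne}) for the strategy $\yy$ in the game $(A,A^T)$, so $(\yy,\yy)$ is a symmetric NE, with common payoff $\pi=\yy^TA\yy=\max_i(A\yy)_i$.

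I do not anticipate a genuine obstacle: the entire content is the observation that an imitation matrix makes the column player's best replies ``copy'' the most-played rows of the row player, after which the two-step support inclusion closes the argument. Positivity of $A$ plays no role in this particular implication — only that $\xx$ is a bona fide mixed strategy — so the proof I would write is essentially the chain of inclusions above together with a pointer to (\ref{eq.sne}).
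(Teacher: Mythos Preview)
Your proof is correct and is essentially the same argument as the paper's: both establish the chain $y_i>0\Rightarrow x_i>0\Rightarrow (A\yy)_i=\max_k(A\yy)_k$, the first implication coming from the fact that $(\xx^TI)_j=x_j$ so the column player's best responses lie in $\mathrm{supp}(\xx)$, and the second from the row player's best-response condition. Your side remark that the positivity of $A$ is not actually used is also correct; the paper records $\alpha>0$ but never invokes it, and only $\beta=\max_j x_j>0$ (equivalently your $\max_j x_j\ge 1/n>0$) is needed.
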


\begin{proof}
Let $\alpha=\max_i {(A \yy)_i}$ and $\beta=\max_i {(I \xx)_i}$. Clearly, $\alpha > 0$ and $\beta > 0$.
Since $(\xx , \yy)$ is a Nash equilibrium for $(A, I)$,

\begin{itemize}
\item
$\forall i : \ \ x_i >0  \ \Rightarrow \  {(A \yy)_i} = \alpha$  . . .  (1).
\item
$\forall i : \ \ y_i >0  \ \Rightarrow \  {(I \xx)_i} = \beta$  . . .  (2).
\end{itemize}

It is enough show that $\forall i : \ \ y_i >0  \ \Rightarrow \  {(A \yy)_i} = \alpha$.
Clearly, $y_i > 0 \ \Rightarrow \  {(I \xx)_i} = \beta > 0$,  (by (2)) 
$\Rightarrow  x_i > 0 \ \Rightarrow \ {(A \yy)_i} = \alpha$  (by (1)).
\end{proof}

An $n \times n$ matrix will be said to be a {\em positive diagonal matrix} if each of its diagonal entries is a positive number and
each of the non-diagonal entries is zero. Observe that Lemma \ref{lem1} holds even if $I$ is replaced by an arbitrary
positive diagonal matrix. This immediately raises the questions, ``By changing the positive diagonal matrix, are we guaranteed to
get a different symmetric Nash equilibrium for $(A, A^T)$? If so, can we get all symmetric Nash equilibria in this manner?''
Theorem \ref{thm2} provides a negative answer to the first question and Theorem \ref{thm3} provides additional insights to
the second question.

\begin{theorem}
\label{thm2}
Let $(\xx , \yy)$ be a Nash equilibrium for bimatrix game $(A, I)$ and $D$ be a positive diagonal matrix with $d_i > 0$ 
in the $i^{th}$ diagonal entry. Let $s = \sum_i {x_i/d_i}$ and let $\xx'$ be the vector whose $i^{th}$ entry is 
$x_i' = x_i/(s d_i)$. Then $(\xx', \yy)$ is a Nash equilibrium for bimatrix game $(A, D)$.
\end{theorem}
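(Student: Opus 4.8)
The plan is to verify directly the two complementarity conditions of~(\ref{eq.ne}) for the profile $(\xx',\yy)$ in the game $(A,D)$, leveraging the fact that these conditions already hold for $(\xx,\yy)$ in $(A,I)$. The point of the peculiar normalization is that dividing each $x_i$ by $d_i$ exactly cancels the diagonal of $D$ in the second player's payoff vector, while leaving the first player's best responses completely untouched.

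First I would confirm that $\xx'$ is a legitimate mixed strategy. Since $\sum_i x_i = 1$, at least one $x_i$ is positive, so $s = \sum_i x_i/d_i > 0$; then $x'_i = x_i/(s d_i) \ge 0$ and $\sum_i x'_i = s^{-1} \sum_i x_i/d_i = 1$. It is also worth recording immediately that $\xx'$ has the same support as $\xx$, that is, $x'_i > 0$ iff $x_i > 0$.

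For the first player, her payoff matrix is still $A$ and the opponent's strategy is still $\yy$, so the vector $(A\yy)$ is unchanged. Since $(\xx,\yy)$ is a NE of $(A,I)$, every index $i$ in the support of $\xx$ (hence of $\xx'$) attains $\max_k (A\yy)_k$, so the first line of~(\ref{eq.ne}) holds for $(\xx',\yy)$ in $(A,D)$ with the same threshold $\pi_1 = \max_k (A\yy)_k$.

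The crux is the second player, which is where the rescaling does its work. Because $D$ is diagonal, $(\xx'^T D)_j = x'_j d_j = x_j/s$, so the vector of the second player's pure-strategy payoffs against $\xx'$ in $(A,D)$ is exactly $s^{-1}(x_1,\dots,x_n)$, a positive multiple of the vector $(\xx^T I) = (x_1,\dots,x_n)$ of her payoffs against $\xx$ in $(A,I)$. Hence her set of best responses is unchanged, and since $(\xx,\yy)$ being a NE of $(A,I)$ gives $y_j > 0 \Rightarrow x_j = \max_k x_k$, we conclude $y_j > 0 \Rightarrow (\xx'^T D)_j = \max_k (\xx'^T D)_k$, which is the second line of~(\ref{eq.ne}). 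Therefore $(\xx',\yy)$ is a Nash equilibrium of $(A,D)$. There is no serious obstacle: once the normalization $s$ is identified the argument is a one-line computation, and the only thing one has to \emph{see} is that this single reweighting simultaneously neutralizes $D$ for the second player and is invisible to the first.
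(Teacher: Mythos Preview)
Your proof is correct and follows essentially the same approach as the paper's: both verify that $\xx'$ is a probability vector with the same support as $\xx$, observe that the first player's best-response condition is unchanged since $A$ and $\yy$ are unchanged, and then compute $(\xx'^T D)_j = x_j/s$ to see that the second player's payoff vector against $\xx'$ in $(A,D)$ is a positive scalar multiple of her payoff vector against $\xx$ in $(A,I)$, so her best responses are preserved.
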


\begin{proof}
Clearly  $x_i' \geq 0$ for $1 \leq i \leq n$ and $\sum_i {x_i'} = s/s = 1$, hence $\xx'$ is a probability vector. 
Clearly, $x_i > 0$ iff $x_i' > 0$. Therefore, since $\xx$ is a best response to $\yy$ in the game $(A, I)$, 
$\xx'$ is a best response to $\yy$ in the game $(A, D)$. 

Let $\max_i {(I \xx)_i} = \beta$. Clearly, $\max_i {(D \xx')_i} = \beta/s$; furthermore, the set of indices which achieve maximum
are the same in both equalities. Therefore, $y_i > 0 \ \Rightarrow \  (I \xx)_i = \beta \ \Rightarrow (D \xx')_i = \beta/s$.
Therefore, $\yy$ is a best response to $\xx'$ in the game $(A, D)$, hence proving the theorem.
\end{proof}

Lemma \ref{lem1} and Theorem \ref{thm2}  raise the question, ``Consider the bimatrix game $(A, D)$ for a fixed positive diagonal matrix $D$.  
Can one characterize the set of symmetric Nash equilibria $\yy$ of $(A, A^T)$ such that $(\xx, \yy)$ is a Nash equilibrium of
the game $(A, D)$ for some probability vector $\xx$?'' 
Theorem \ref{thm3} implies that this set consists of all symmetric Nash equilibria of $(A, A^T)$.

\begin{theorem}
\label{thm3}
Let $\yy$ be any symmetric Nash equilibrium for bimatrix game $(A, A^T)$ and $D$ be a positive diagonal matrix.
Then there is a probability vector $\xx$ such that $(\xx , \yy)$ is a Nash equilibrium for bimatrix game $(A, D)$.
\end{theorem}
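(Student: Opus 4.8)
The plan is to argue that a symmetric Nash equilibrium $\yy$ of $(A,A^T)$ already supplies a best response of the first player against itself, and that we can then adjust the first player's strategy so that $\yy$ stays a best response against the new strategy in the imitation-type game $(A,D)$. First I would record the complementarity conditions for $\yy$ being a symmetric NE of $(A,A^T)$: by (\ref{eq.sne}) there is a scalar $\pi$ with $(A\yy)_i \le \pi$ for all $i$, and $y_i > 0 \Rightarrow (A\yy)_i = \pi$; moreover since $A > 0$ and $\yy \in X$ we have $\pi > 0$. The key observation is that this says precisely that $\yy$ is a best response to itself for the row player in the game $(A,D)$ as well, because the row player's payoff matrix is still $A$; what needs work is producing a column strategy $\xx$ that makes $\yy$ a best response for the column player, whose payoff matrix is now $D$ rather than $A^T$.

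Next I would construct $\xx$ explicitly by mimicking the scaling already used in Theorem \ref{thm2}, but run in reverse. For the column player with diagonal matrix $D = \mathrm{diag}(d_1,\dots,d_n)$ and first-player strategy $\xx$, the payoff from pure strategy $i$ is $(D\xx^{\mathrm{col}})$-type expression; more carefully, the column player's payoff when the row player uses $\xx$ is $\xx^T D \yy' = \sum_i d_i x_i y'_i$, so the column player's pure-strategy payoffs are $d_i x_i$. To make the set of maximizers of $d_i x_i$ equal to the support of $\yy$, I would set $x_i = y_i / d_i$ and then normalize: let $t = \sum_i y_i/d_i$ and $x_i = y_i/(t d_i)$. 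Then $d_i x_i = y_i / t$, so the column player's payoff from pure strategy $i$ is $y_i/t$, which is maximized exactly on the support of $\yy$ — hence $y_i > 0$ implies $i$ is a best response, giving the column-player complementarity condition. One still must check that with this $\xx$, the row player's complementarity holds: $x_i > 0 \Leftrightarrow y_i > 0$, so the support of $\xx$ is contained in (indeed equals) the support of $\yy$, and we showed every $i$ with $y_i > 0$ satisfies $(A\yy)_i = \pi = \max_k (A\yy)_k$, so $\xx$ is a best response to $\yy$ in $(A,D)$.

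Finally I would assemble these two facts: $\xx$ (a genuine probability vector, since all $x_i \ge 0$ and they sum to $1$ by construction) is a best response to $\yy$ in $(A,D)$, and $\yy$ is a best response to $\xx$ in $(A,D)$; therefore $(\xx,\yy)$ is a Nash equilibrium of $(A,D)$, proving the theorem. The only mild subtlety — the step I would flag as needing the most care — is verifying the direction of the column player's best-response condition: one must be sure that after the $x_i = y_i/(td_i)$ rescaling the argmax set of the column payoffs is exactly the support of $\yy$ and not something larger, which requires knowing that outside the support of $\yy$ we have $y_i = 0$ and hence $d_i x_i = 0 < y_j/t$ for $j$ in the support; this is where positivity of $D$ and of the support values of $\yy$ is used. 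Everything else is routine bookkeeping essentially dual to the computation in the proof of Theorem \ref{thm2}.
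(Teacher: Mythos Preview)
Your construction has a genuine gap in the column-player step. You set $x_i = y_i/(t d_i)$ and then claim that the column payoffs $d_i x_i = y_i/t$ are ``maximized exactly on the support of $\yy$''. But $y_i/t$ is maximized only at the indices where $y_i$ is \emph{largest}, not at every index in the support of $\yy$. Thus if the symmetric equilibrium $\yy$ has two positive coordinates of different size --- say $\yy = (0.3,\,0.7,\,0,\dots,0)$ --- then with your $\xx$ the column player's unique best response is strategy~$2$, and $\yy$ (which puts weight on strategy~$1$) is \emph{not} a best response to $\xx$ in $(A,D)$. Your final paragraph worries about the argmax being too large; the real danger is that it is too small.

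The easy repair is to make the column payoffs $d_i x_i$ \emph{constant} on the support of $\yy$: take $x_i$ proportional to $1/d_i$ on $\mathrm{supp}(\yy)$ and zero elsewhere, then normalize. Then $d_i x_i$ is the same positive constant for every $i$ with $y_i>0$ and zero outside, so $\yy$ is a column best response; and since $\mathrm{supp}(\xx)=\mathrm{supp}(\yy)\subseteq\argmax_k (A\yy)_k$ by the symmetric-NE condition, $\xx$ is a row best response. The paper does essentially this in two steps: it first puts $\xx$ uniform on the set $S$ of row best responses to $\yy$ to obtain a NE of $(A,I)$, and then invokes Theorem~\ref{thm2} to pass from $I$ to any positive diagonal $D$ via exactly the $1/d_i$ rescaling.
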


\begin{proof}
We will construct a probability vector $\xx$ such that $(\xx , \yy)$ is a Nash equilibrium for bimatrix game $(A, I)$. 
Then, the needed result will follow from Theorem \ref{thm2}.

Let $S = \{i \ |  \ {(A \yy)_i} >0 \}$.
Let $\xx$ be the probability vector whose $i^{th}$ coordinate is $1/|S|$ if $i \in S$ and 0 otherwise.
Let $\max_i {(A \yy)_i} = \alpha$. Since $\yy$ is a symmetric Nash equilibrium for bimatrix game $(A, A^T)$,
$\forall i : \ \ y_i >0  \ \Rightarrow \  {(A \yy)_i} = \alpha$. Therefore, the set of indices which achieve maximum in $\xx$ are
the same as those in $\yy$, hence showing that $(\xx , \yy)$ is a Nash equilibrium for bimatrix game $(A, I)$.
\end{proof}
\end{document}